\def\EE{\mathbb E}
\def\bH{\mathbf{H}}
\def\C{\mathcal{C}}
\def\N{\mathcal{N}} 
\def\V{\mathcal{V}}
\newcommand{\ol}[1]{\overline{{#1}}}
\DeclareMathOperator*{\argmin}{arg\,min}
\DeclareMathOperator*{\argmax}{arg\,max}
\newcommand{\squeezeup}{\vspace{0mm}}
\def\EE{\mathbb E}
\def\NNC{\text{NNC}}
\def\C{\mathcal{C}}
\def\A{\mathcal{A}}
\def\N{\mathcal{N}} 
\def\V{\mathcal{V}}
\def\L{\Omega}
\newtheorem{lemma}{Lemma}
\newtheorem{theorem}{Theorem}
\newtheorem{claim}{Claim}
\newtheorem{corollary}{Corollary}
\newtheorem{remark}{Remark}
\title{Capacity Approximations for \\Gaussian Relay Networks}
\author{Ritesh~Kolte,~\IEEEmembership{Student Member,~IEEE,} Ayfer~\"{O}zg\"{u}r,~\IEEEmembership{Member,~IEEE} and Abbas El Gamal,~\IEEEmembership{Fellow,~IEEE}    \thanks{Manuscript received July 14, 2014; accepted July 6, 2015; date of current version August 2015.  This work was presented in part in ITW 2013 Seville Spain \cite{KO13} and IZS 2014 Zurich Switzerland \cite{KOE14}.}
    \thanks{Communicated by Tie Liu, Associate Editor for Shannon Theory.}
        \thanks{The authors are with the Department of Electrical Engineering at
    Stanford University. (Emails: rkolte@stanford.edu,
    aozgur@stanford.edu, abbas@ee.stanford.edu). The work of R.~Kolte and A.~\"{O}zg\"{u}r was partly supported a Stanford Graduate Fellowship, NSF CAREER award 1254786 and the NSF Center for Science of Information under grant agreement CCF-0939370.}}%
\begin{document}
\maketitle

\begin{abstract}

Consider a Gaussian relay network where a source node communicates to a destination node with the help of several layers of relays. Recent work has shown that compress-and-forward
based strategies can achieve the capacity
of this network within an additive gap. Here, the
relays quantize their received signals at the noise level and map
them to random Gaussian codebooks. The resultant gap to capacity 
is independent of the SNR's of the channels in the network and the topology but is
linear in the total number of nodes.

In this paper, we provide an improved lower bound on the rate achieved by compress-and-forward based strategies (noisy network coding in particular) in arbitrary Gaussian relay networks, whose gap to capacity depends on the network not only through the total number of nodes but also through the degrees of freedom of the min cut of the network. We illustrate that for many networks, this refined lower bound can lead to a better approximation of the capacity. In particular, we demonstrate that it leads to a logarithmic rather than linear capacity gap in the total number of nodes for certain classes of layered networks. 
The improvement comes from quantizing the received signals of the relays at a resolution decreasing with the total number of nodes in the network. This suggests that the rule-of-thumb in literature of quantizing the received signals at the noise level can be highly suboptimal.
\end{abstract}

\begin{keywords}
  Relay Networks, Gap to Capacity, Noisy Network Coding, Network Topology, Quantization
\end{keywords}

\section{Introduction}

Consider a source node communicating to a destination node via a sequence of relays connected by point-to-point AWGN channels, as depicted in Figure~\ref{fig:line}. The capacity of this line network is achieved by simple decode-and-forward and is equal to the minimum of the capacities of the successive point-to-point links. The decoding at each stage removes the noise corrupting the information signal and therefore the end-to-end rate achieved is independent of the number of times the message is retransmitted. 

Unfortunately, the optimality of decode-and-forward is limited to this line topology, and in physically degraded networks in general. In more general networks with multiple relays at each layer, it is well-understood that the rate achieved by decode-and-forward can be arbitrarily smaller than capacity. Characterizing the capacity of more general networks  has been of interest for a long time \cite{CE79} (also see \cite{Kramer07} and references therein). Recently, significant progress has been made (\cite{ADT11,LKEC11,OD13,RV14,KH11}) which shows that compress-and-forward based strategies can be a better fit for general relay networks. Here, relays quantize/compress their observations without decoding and forward the compressions to the destination by mapping them to a new codebook. In particular, it has been shown that  compress-and-forward based relaying strategies (such as quantize-map-and-forward in \cite{ADT11} and noisy network coding in \cite{LKEC11}) can achieve rates that are within a bounded gap to the capacity of any relay network with multi-source multicast traffic. The gap is independent of the coefficients and SNR's of the constituent channels and the topology of the network. However, it depends linearly on the total number of nodes which limits the applicability of these results to small networks with a few relays. A recent result that we would like to point out here is \cite{LKK14} in which an extension of partial-decode-forward, called distributed decode-forward, has been shown to achieve a similar result. The gap to capacity for this scheme is also shown to be linear in the number of nodes, with a lower constant compared to noisy network coding.

Since the gap to capacity of compress-forward based strategies is linear in the number of nodes, for the line network in Figure~\ref{fig:line}, they yield an achievable rate whose gap to capacity is linear in the depth of the network $D$. One natural way to explain this gap is the noise accumulation. As the information signal proceeds deeper into the network, it is corrupted by more and more noise. Therefore, any strategy that does not remove the noise corrupting the signal at each stage by decoding the source message will naturally suffer a rate loss that increases with the number of stages. However, it is not clear why this rate loss should be \emph{linear} in the depth of the network as the current results in the literature suggest \cite{ADT11,LKEC11,OD13}. The total variance of the accumulated noise over the $D$ stages of the network is $D$ times the variance of the noise at each stage (assuming identical noise variances over the $D$ stages). A factor of $D$ increase in the noise variance in a point-to-point Gaussian channel would lead to at most a $\log D$ decrease in capacity, and therefore it is natural to ask if we can reduce the performance loss of compress-and-forward strategies from linear to logarithmic in $D$, first in the context of this example and then in more general networks.

\looseness=-10
\begin{figure}[t]
\centering
\includegraphics[scale=1]{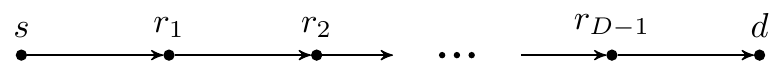}
\label{subfig:line}
\caption{Line Network 
}\label{fig:line}
\end{figure}

The first contribution of this paper is to show that a judicious choice of the quantization (or compression) resolutions at the relays can significantly improve the performance of compress-and-forward based strategies (noisy network coding in particular). For example in the line network in Figure~\ref{fig:line}, if the relay nodes quantize their observed signals at a resolution decreasing linearly in $D$, the rate loss due to compress-and-forward is only logarithmic in $D$. (See Section \ref{sec:line}.) This is counterintuitive as coarser quantization introduces more noise to the communication and our result suggests that the more relaying stages we have, the more coarsely we should quantize. The rule-of-thumb used in the current literature \cite{ADT11,LKEC11,OD13} is to quantize the received signals at the noise level (independent of the number of relays) which we show to be highly suboptimal. The improvement due to coarser quantization is because in compress-and-forward, there is a rate penalty for communicating the quantized signals to the destination and this rate penalty can be significantly larger than the rate penalty associated with coarser quantization. A detailed discussion on this is presented in Section~\ref{sec:nnc}. The fact that optimizing the quantization resolutions can lead to better rates for compress-and-forward was also observed in \cite{CO14}, \cite{SWF12} in the context of the Gaussian diamond network. 

An immediate question is whether this observation can lead to better capacity approximations for more general Gaussian networks beyond the line network. To address this question, we suggest a new approximation philosophy for the capacity of Gaussian networks. The current approach is to approximate the capacity within a gap that depends only on the number of nodes. However, two networks with the same number of nodes can have very different topologies which can potentially lead to significantly different performance for compress-and-forward. While it is desirable to have capacity approximations which are independent of the instantaneous channel realizations and SNR's in the network, since these parameters have a wide dynamical range and typically change over a short time scale in wireless networks, topological properties of a network typically change over a much longer time scale. Developing capacity approximations which reveal the dependence of the gap not only on the number of nodes but other structural properties of the network can allow for a better understanding of the performance gap of compress-and-forward strategies as well as yield tighter capacity approximations for many Gaussian networks.   

The main result of this paper is a new capacity approximation for Gaussian networks where the gap to capacity depends not only on the number of nodes but also on the number of degrees of freedom (DOF) of the mincut of the network. While the DOF of the mincut of the network can be carefully evaluated for a given network with specific channel realizations (in which case our result will yield the tightest approximation for this network), in many cases this quantity can be easily bounded based only on the topological properties of the network. For example, for the line network in Figure~\ref{fig:line} the DOF of the mincut is trivially bounded by $1$, while for a diamond network \cite{CO14} it can be trivially bounded by $2$. For such networks, our result yields a logarithmic rather than linear gap in the number of nodes. As before, the improvement is based on a judicious choice of the quantization resolutions at the relays with noisy network coding.

Finally, we look at specific settings and demonstrate that our general result can yield better capacity approximations for these settings than those available in the literature. The first setup we consider is the multi-layer fast-fading Gaussian relay network in Figure~\ref{fig:layered}. Here a source node equipped with $K$ antennas communicates to a destination node equipped with $K$ antennas over $D$ layers, each layer containing $K$ single-antenna relays. Each relay observes a noisy linear combination of the signals transmitted by the relays in the previous layer. All channels are subject to i.i.d. Rayleigh fast-fading. Current results on compress-and-forward \cite{ADT11,LKEC11,OD13} yield a rate which is within $1.3\,KD$ gap to the capacity of this network, where $KD$ is the total number of nodes. Instead, we show that if relays quantize their received signals at a resolution that decreases as the number of layers increases, compress-and-forward can achieve a rate which is within an additive gap of $K\log D+K$ of the network capacity. So for a fixed $K$, as the number of layers $D$ increases, this gap only grows logarithmically in the depth of the network $D$. 

As a side result, we provide an analysis of the compress-and-forward based strategies in \cite{ADT11,LKEC11,OD13} in fast-fading wireless networks. Fast-fading wireless networks are considered in Theorem 8.4 of \cite{ADT11}, however the conclusion of the theorem and its proof are erroneous. Theorem 8.4 of \cite{ADT11} suggests that the ergodic fast-fading capacity of a wireless relay network is approximately given by the expected value of the cutset upper bound (where the expectation is over the fading distribution). In contrast, we show that the capacity is approximately given by the minimum of the expected cut values. The difference is in the order of the expectation over the fading distribution and the minimization over different cuts. Note that the second quantity can be arbitrarily larger than the first.

\begin{figure}[!ht]
\centering
\includegraphics[scale=1]{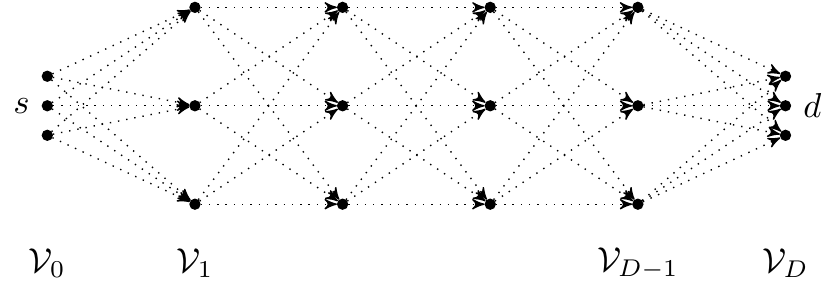}
\caption{
Multi-Layer Relay Network for $K=3$, each $H_i$ is a Rayleigh fading matrix
}\label{fig:layered}
\end{figure}

The problem of developing better capacity approximations for this  setup has also been considered in \cite{NNW11}, where a computation alignment strategy is proposed to remove the accumulating noise with the depth of the network. This yields a gap $7K^3 +5K\log K$. Computation alignment is based on the idea of combining compute-forward \cite{NG11} with ergodic alignment proposed in \cite{NGJV09}. While the gap to capacity obtained by computation alignment is independent of $D$, this strategy is significantly more complex than compress-forward and has a number of problems from a practical perspective. In particular, ergodic alignment over the fading process leads to large delays in communication and requires each relay  to know the instantaneous realizations of all the channels in the network. Moreover, its performance critically depends on the symmetry of the fading statistics. The compress-forward strategy with improved quantization we propose in this paper requires only the destination to know the instantaneous channel realizations in the network. In particular, no channel state information is required at the source and at the relays, and the fading statistics are not critical to the operation of the strategy. 

To illustrate this last point, we consider another setup where the network has the same layered topology, however the channel coefficients for each link are now fixed with unit magnitudes and arbitrary phases (i.e. each channel coefficient is of the form $e^{j\theta}$ for some arbitrary $\theta\in[0,2\pi]$). Our approximation gap for this setup is $2K^2\log D+K\log K+K$ which is again logarithmic in the depth of the network rather than linear. Computation alignment is obviously not applicable in this case and the best currently available capacity approximation for this setup is $1.3KD$ which follows from capacity approximations for general Gaussian networks \cite{ADT11,LKEC11,OD13}.

The aforementioned and previous results raise the question of whether tighter gaps scaling sublinearly in the network size can be obtained in the general case (independent of network topology). In this respect, we would like to mention an interesting recent work \cite{CO15} that shows that obtaining a gap between capacity and cutset bound that is sublinear in the number of nodes for general Gaussian relay networks is possible if and only if the cutset bound is tight for \emph{all} Gaussian relay networks.

The paper is organized as follows. The next section describes the model and some background. The main results and a discussion of the results are presented in Section~\ref{sec:main_res}. We illustrate the basic idea behind the results via the simple example of a line network in Section~\ref{sec:line}. Section~\ref{sec:nnc} aims to clarify the counterintuitive observation that coarser quantization at the relays can result in a better achievable rate. The formal proofs of the main results are presented in Sections~\ref{sec:proof1},~\ref{sec:fast_layered}~and~\ref{sec:stat_layered}. 

\section{Model and Preliminaries}\label{sec:model}
In the following subsection, we describe the general model of a Gaussian relay network, which is the subject of our main result. 
\subsection{General Model}\label{subsec:general}
Consider a Gaussian relay network, as depicted in Figure~\ref{fig:relay} where a source node $s$ communicates to a destination node $d$ a message $m\in[1:2^{nR}]$ in $n$ transmissions with the help of a set of relay nodes. Let the number of transmit antennas and receive antennas at node $i$ be $M_i$ and $N_i$ respectively. We assume $N_s=0$ and $M_d=0.$ Let $\N$ denote the set of all nodes and $M=\sum_{i\in\N}M_i$ and $N=\sum_{i\in\N}N_i$ be the total number of transmit and receive antennas respectively. The signal received by node $i$ at time $t$ is denoted as $\mathbf{Y}_i[t]\in\mathbb{C}^{N_i\times 1}$ which is given by
$$
\mathbf{Y}_i[t]=\sum_{j\neq i} \bH_{ij} \mathbf{X}_j[t]+\mathbf{Z}_i[t],
$$
where $\bH_{ij}\in\mathbb{C}^{N_i\times M_j}$ contains the (complex) channel gains from node $j$ to node $i$, and $\mathbf{X}_j[t]\in\mathbb{C}^{M_j\times 1}$ is the transmitted vector by node $j$ at time $t$. We assume that $\mathbf{Y}_{s}=0$ and $\mathbf{X}_{d}=0$. Each node is subject to an average power constraint $P$ per antenna and $\mathbf{Z}_i[t]\sim\C\N(0,\sigma^2 I)$, independent across time and across different receive antennas. The relays are constrained to be strictly causal in their operations, i.e. at any relay node $i$, $\mathbf{X}_i[t]$ can be a function only of $\{\mathbf{Y}_i[1],\mathbf{Y}_i[2],\dots ,\mathbf{Y}_i[t-1]\}.$  

A rate $R$ is said to be achievable if the probability of error of decoding the message $m\in[1:2^{nR}]$ at the destination $d$ can be made arbitrarily small by choosing a sufficiently large $n$. The supremum of all achievable rates is called the capacity $C$ of the network.

\begin{figure}[!ht]
\centering
\includegraphics[scale=1]{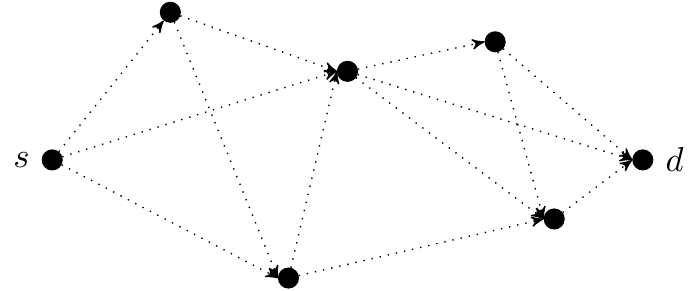}
\caption{Gaussian Relay Network}
\label{fig:relay}
\end{figure}

In sections \ref{sec:fast_layered} and \ref{sec:stat_layered}, we focus on the following two special cases of Gaussian relay networks respectively.

\subsection{Fast-fading Layered Network}\label{subsec:fast_layered}
In section \ref{sec:fast_layered}, as stated in the introduction and depicted in Figure~\ref{fig:layered}, we consider a fast-fading layered network, where each layer except the first and last contains $K$ single-antenna nodes. The nodes in the $i$th layer are collectively referred to as $\V_i$ where $0\leq i\leq D$, while a particular node $j$ in layer $i$ is referred to as the pair $(i,j)$. The layer $\V_0$ consists of the source node $s$ containing $K$ transmit antennas, while the layer $\V_D$ consists of the destination node $d$, which has $K$ receive antennas. Let $\V^{i}$ denote $\V_0\cup\V_1\cup\dots\cup \V_i$. We assume that $s$ and $d$ are equipped with multiple antennas in order to keep the problem interesting. Otherwise, the minimum cut becomes the multiple-input-single-output cut from the last layer of relays to $d$ and this trivializes the problem of approximately achieving the capacity of the network. Instead of multiple antennas at $d$, one can also assume orthogonal bit-pipes from nodes in $\V_{D-1}$ to $d$, as done in \cite{NNW11}.  

For $0\leq i\leq D-1$, the received signal at node $(i+1,j)$ in $\V_{i+1}$ (or antenna if $i=D-1$) depends only on the transmit signals of nodes in $\V_i$ and at time $t$ is given by
$$Y_{(i+1,j)}[t]=\sum_{k=1}^K h_{(i,k)\rightarrow(i+1,j)}[t]X_{(i,k)}[t]+Z_{(i+1,j)}[t],$$
The channel gain $h_{(i,k)\rightarrow(i+1,j)}$ is i.i.d. $\C\N(0,1)$ across time independent of everything else (i.e., other channel gains, noise and transmitted signals). In other words, we assume independent fast Rayleigh fading. The source nodes and the relay nodes do not know the instantaneous realizations of the channel coefficients, i.e have no transmit or receive channel state information. (The source node knows the topology of the network and the channel statistics, i.e. the end-to-end ergodic rate supported by the network.) All instantaneous channel realizations are known at the destination node and are used while decoding the transmitted message from the source node. Thus, we can effectively treat $\left\{Y_d,H\right\}$ as the received signal at the destination, where $H$ contains all the channel realizations.

\subsection{Static Layered Network }\label{subsec:stat_layered}
The topology of the static layered network that we consider in Section~\ref{sec:stat_layered} is the same as that of the fast-fading layered network, i.e. a source node with $K$ transmit antennas communicates to a destination node with $K$ receive antennas over $D-1$ layers each containing $K$ single-antenna relays. However, instead of assuming fast-fading, we now focus on the case where each channel gain $h_{(i,k)\rightarrow(i+1,j)}$ is an arbitrary complex number with unit magnitude, i.e., of the form $e^{j\theta}$ for some arbitrary $\theta\in[0,2\pi]$ (possibly different for different $(i,k)\rightarrow(i+1,j)$), where the $j$ in the superscript stands for the imaginary unit. 

\subsection{Background}\label{subsec:background}

An upper bound on the capacity $C$ of any relay network is given by the cutset bound \cite{E81}, which is as follows,
\begin{equation}\label{eq:cutset_prelim}
C \leq \ol{C} \triangleq \sup_{p(x_{\N})}\left(\min_{\L:s\in\L,d\in\L^c} \ol{C}(\L)\right),
\end{equation}
where $\L$ is a subset of $\N$, and \begin{equation}\label{eq:mutinfocut}
\ol{C}(\L) \triangleq I(X_{\L};Y_{\L^c}|X_{\L^c}),
\end{equation}
and $\L^c$ denotes $\N\setminus\L.$ The notation $X_{\L}$ is standard and refers to the set of random variables $\{X_i:i\in\L\}.$ 

In \cite{LKEC11}, the authors propose an achievability scheme based on compress-and-forward operation at the relays named ``noisy network coding'' (NNC). 
This scheme achieves any rate $R$ that is less than $R_{\NNC}$, which is given in \eqref{eq:nnc_prelim} at the top of the next page. To keep the expressions short, we are assuming that $\hat{Y}_{\L^c}$ contains $Y_d$. In other words, $\hat{Y}_d$ can be set to be equal to $Y_d.$ We refer the reader to \cite{LKEC11} for the details of this scheme. It is shown in \cite{LKEC11} that the gap between the cutset bound and the rates achieved by noisy network coding for Gaussian relay networks with multi-source multicast traffic is no more than $1.3|\N|.$

\begin{figure*}[!th]
\normalsize
\begin{equation}\label{eq:nnc_prelim}
R_{\NNC} \triangleq \sup_{\prod_{k\in\mathcal{N}}
p(x_k)p(\hat{y}_k|y_k, x_k)}\min_{\L:s\in\L,d\in\L^c} \left(I(X_\L; \hat{Y}_{\L^c}|X_{\L^c}) - I(Y_\L; \hat{Y}_\L | X_\mathcal{N}, \hat{Y}_{\L^c}) \right).
\end{equation}
\hrulefill
\end{figure*}

\section{Main Result}\label{sec:main_res}

Given a Gaussian relay network as described in Section~\ref{subsec:general} and a cut of this network $\L\subseteq\N$, for any $Q\geq 0$, we define
\begin{equation}\label{eq:C_iid}C^{i.i.d.}_{Q}(\L) \triangleq \log\det\left(I+\frac{P}{(Q+1)\sigma^2}\bH_{\L\rightarrow\L^c}\bH_{\L\rightarrow\L^c}^\dagger\right),\end{equation}
where the matrix $\bH_{\L\rightarrow\L^c}$ denotes the induced MIMO matrix from $\L$ to $\L^c$. In the case of single-antenna nodes, it is obtained by enumerating nodes in $\L$ and $\L^c$ in an arbitrary fashion and $\bH_{\L\rightarrow\L^c}$ is the $|\L^c|\times|\L|$ matrix whose $(i,j)$th entry contains the channel coefficient from node $j\in\L$ to node $i\in \L^c$. In the case of multiple antennas, it is obtained by enumerating the transmit antennas in $\L$ and receive antennas in $\L^c$ and the entries of the matrix denote the corresponding channel coefficient. In this paper, $\log$ denotes the natural logarithm. The expression in \eqref{eq:C_iid} is the mutual information across the cut $\L$, defined in \eqref{eq:mutinfocut}, when the channel input distributions at each node are i.i.d. $\C\N\left(0,PI\right)$ and the noise at each antenna is i.i.d. $\C\N(0,(Q+1)\sigma^2)$ (instead of $\C\N(0,\sigma^2)$ as originally defined in Section~\ref{subsec:general}). For a given $Q\geq 0$, let $\L_Q^*$ be the cut that minimizes $C^{i.i.d.}_Q(\L)$,
\begin{equation}\label{eq:Lq*}
\L_Q^*\triangleq\argmin_{\L:s\in\L,d\in\L^c}C^{i.i.d.}_Q(\L).
\end{equation}
Let $d_Q^*$ be the rank of the corresponding MIMO matrix $\bH_{\L_Q^*\rightarrow(\L_Q^*)^c}$. We will also refer to $d_Q^*$ as the number of degrees of freedom of the \textsc{MIMO} channel corresponding to the cut $\L_Q^*$,
expressed succinctly as \begin{equation}\label{eq:dofq}
d_Q^* = \textsc{DOF}\left(\argmin_{\L:s\in\L,d\in\L^c}C^{i.i.d.}_Q(\L)\right).
\end{equation} Note that the min cut $\L_Q^*$ and therefore $d_Q^*$ depends on $Q$. In particular, if $Q_1$ and $Q_2$ are two non-negative numbers and say $Q_1> Q_2 \geq 0$, then $d_{Q_1}^*$ can be larger than, smaller than or same as $d_{Q_2}^*$. The following theorem states our main result.

\begin{theorem}\label{thm:new_main}
The capacity $C$ of the network described in Section~\ref{subsec:general} satisfies
$$\ol{C} \geq C\geq \ol{C} - d_0^*\log\left(1+\frac{M}{d_0^*}\right) - \frac{N}{Q} - d_Q^*\log(Q+1),$$
for any non-negative $Q$, where $ \ol{C}$ is the cutset bound of the network given in \eqref{eq:cutset_prelim}.
\end{theorem}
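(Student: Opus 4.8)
The plan is to instantiate the noisy network coding lower bound \eqref{eq:nnc_prelim} with a single, explicitly chosen input-and-quantization distribution and then control the two resulting mutual-information terms cut by cut, while the upper bound $\ol C\ge C$ is just the cutset bound \eqref{eq:cutset_prelim}. Concretely, I would take each node's input to be i.i.d.\ $\C\N(0,PI)$ and each relay's test channel to be $\hat Y_k = Y_k + \hat Z_k$ with $\hat Z_k\sim\C\N(0,Q\sigma^2 I)$ independent across antennas, while keeping $\hat Y_d = Y_d$ at the destination. With this choice $R_{\NNC}$ reduces to $\min_{\L}\big(I(X_\L;\hat Y_{\L^c}\mid X_{\L^c}) - I(Y_\L;\hat Y_\L\mid X_\N,\hat Y_{\L^c})\big)$ evaluated at the fixed distribution, so the problem becomes bounding these two terms.

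For the quantization penalty (second term), conditioning on $X_\N$ makes each $Y_\L$ pure noise, and since the auxiliary noises are independent across the cut, the conditioning on $\hat Y_{\L^c}$ drops out; this gives $I(Y_\L;\hat Y_\L\mid X_\N,\hat Y_{\L^c}) = (\text{number of receive antennas in }\L)\log(1+1/Q)\le N\log(1+1/Q)\le N/Q$. For the forward term, conditioning on $X_{\L^c}$ leaves (up to the known term $\bH_{\L^c\rightarrow\L^c}X_{\L^c}$) an effective channel $\bH_{\L\rightarrow\L^c}X_\L$ whose noise covariance is $(Q+1)\sigma^2 I$ at the relays but only $\sigma^2 I$ at $d$; since smaller noise can only increase mutual information (monotonicity of $\log\det$ in the PSD order, as the true noise covariance is $\preceq(Q+1)\sigma^2 I$), I get $I(X_\L;\hat Y_{\L^c}\mid X_{\L^c})\ge C^{i.i.d.}_Q(\L)$. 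Combining the two estimates, $R_{\NNC}\ge \min_\L C^{i.i.d.}_Q(\L) - N/Q = C^{i.i.d.}_Q(\L_Q^*) - N/Q$.

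It then remains to show $C^{i.i.d.}_Q(\L_Q^*)\ge \ol C - d_0^*\log(1+M/d_0^*) - d_Q^*\log(Q+1)$, which I would do via the chain
\[ C^{i.i.d.}_Q(\L_Q^*)\ \ge\ C^{i.i.d.}_0(\L_Q^*) - d_Q^*\log(Q+1)\ \ge\ C^{i.i.d.}_0(\L_0^*) - d_Q^*\log(Q+1). \]
The first inequality compares the same cut at noise $\sigma^2$ versus $(Q+1)\sigma^2$: writing the difference as a sum over the $d_Q^*$ nonzero eigenvalues $\lambda_i$ of $\bH_{\L_Q^*\rightarrow(\L_Q^*)^c}\bH_{\L_Q^*\rightarrow(\L_Q^*)^c}^\dagger$, each term $\log\frac{1+P\lambda_i/\sigma^2}{1+P\lambda_i/((Q+1)\sigma^2)}\le\log(Q+1)$. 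The second inequality is simply the definition of $\L_0^*$ as the minimizer of $C^{i.i.d.}_0$. Finally I would compare $\ol C$ with $C^{i.i.d.}_0(\L_0^*)$: for the cutset-optimal input $p^*$, $\ol C\le I_{p^*}(X_{\L_0^*};Y_{(\L_0^*)^c}\mid X_{(\L_0^*)^c})$, and by the maximum-entropy property this is at most $\log\det(I+\tfrac1{\sigma^2}\bH_{\L_0^*\rightarrow(\L_0^*)^c}K\bH_{\L_0^*\rightarrow(\L_0^*)^c}^\dagger)$ for the conditional input covariance $K$, which by the per-antenna power constraint satisfies $\tr(K)\le M P$.

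The main obstacle is this last step: converting the trace constraint into the \emph{tight} gap $d_0^*\log(1+M/d_0^*)$ rather than the loose $d_0^*\log M$ that follows from $K\preceq \tr(K)I$. I would reduce via the thin SVD $\bH_{\L_0^*\rightarrow(\L_0^*)^c}=U\Sigma V^\dagger$ (rank $d_0^*$) to the $d_0^*\times d_0^*$ comparison $\log\det(I+S^{1/2}LS^{1/2})-\log\det(I+S)$, where $S=\tfrac{P}{\sigma^2}\Sigma^2$ and $L=\tfrac1P V^\dagger K V\succeq0$ with $\tr(L)\le M$. Factoring out $\det(I+S)$ rewrites this as $\log\det(I+P_0 L P_0)$ with $P_0=(I+S)^{-1/2}S^{1/2}$ diagonal and all entries below $1$, so $\tr(P_0 L P_0)\le\tr(L)\le M$ while there are only $d_0^*$ eigenvalues; concavity of $\log(1+\cdot)$ via Jensen then yields exactly $d_0^*\log(1+M/d_0^*)$. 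Equivalently, at the scalar level one strips the channel dependence using $\tfrac{1+\ell s}{1+s}\le 1+\ell$ before applying Jensen. Assembling the three bounds together with $C\le\ol C$ gives the claimed two-sided estimate for every $Q\ge0$.
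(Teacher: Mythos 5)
Your proposal is correct, and its skeleton coincides with the paper's proof: the same instantiation of \eqref{eq:nnc_prelim} with i.i.d.\ $\C\N(0,PI)$ inputs and $\hat Y_k = Y_k+\hat Z_k$, $\hat Z_k\sim\C\N(0,Q\sigma^2)$; the same min--max relaxation \eqref{eq:nnc_minmax}; the same per-antenna entropy computation giving $I(Y_\L;\hat Y_\L\mid X_\N,\hat Y_{\L^c})\le N\log(1+1/Q)\le N/Q$ (you get equality before the $\log(1+1/Q)\le 1/Q$ step by exploiting independence across the cut, where the paper is content with dropping conditioning, and you are slightly more careful than the paper in writing $I(X_\L;\hat Y_{\L^c}\mid X_{\L^c})\ge C^{i.i.d.}_Q(\L)$ rather than equality, since $\hat Y_d=Y_d$ carries noise $\sigma^2$ rather than $(Q+1)\sigma^2$); and the same chain $C^{i.i.d.}_Q(\L_Q^*)\ge C^{i.i.d.}_0(\L_Q^*)-d_Q^*\log(Q+1)\ge C^{i.i.d.}_0(\L_0^*)-d_Q^*\log(Q+1)$. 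The one genuine difference is the final step: the paper simply cites \cite[Lemma 6.6, eq.~(144)]{ADT11} to bound the loss of independent inputs against the per-antenna-power-constrained MIMO capacity by $d_0^*\log\bigl(1+\tfrac{\sum_{i\in\L_0^*}M_i}{d_0^*}\bigr)$, whereas you re-derive this lemma from scratch via the thin SVD, the trace bound $\tr(V^\dagger K V)\le \tr(K)\le MP$, and Jensen on the $d_0^*$ eigenvalues. Your derivation is sound, with one imprecision worth fixing: ``factoring out $\det(I+S)$'' is not an identity but an inequality in the needed direction, namely $\det(I+S^{1/2}LS^{1/2})\le\det(I+S+S^{1/2}LS^{1/2})=\det(I+S)\det(I+P_0LP_0)$, after which $\tr(P_0LP_0)\le\tr(L)\le M$ and concavity give $d_0^*\log(1+M/d_0^*)$; your scalar remark $\tfrac{1+\ell s}{1+s}\le 1+\ell$ shows you have the right mechanism in mind. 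What your route buys is a self-contained proof that makes explicit where the $M/d_0^*$ ratio comes from (trace constraint spread over only $d_0^*$ spatial degrees of freedom); what the paper's citation buys is brevity.
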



Note that $Q$ in the theorem is a free parameter that can be optimized for a given network to minimize the gap between the achieved rate and the cutset upper bound. In the proof of the theorem, we will see that $Q$ corresponds to the variance of the quantization noise introduced at the relays in noisy network coding \cite{LKEC11}; larger $Q$ corresponds to coarser quantization. In previous works \cite{ADT11,LKEC11}, $Q$ is chosen to be constant independent of the number of nodes (or antennas) $N$ (i.e. $Q\approx 1$ and the quantization noise $Q\sigma^2$ is of the order of the Gaussian noise variance $\sigma^2$). Observe that due to the third term $N/Q$ of the gap in Theorem~\ref{thm:new_main}, this results in a gap that is at least linear in $N$. Trivially upper bounding both $d_0^*$ and $d_Q^*$  by $N$ makes the first and the third term also linear in $N$. However, in many cases, the min cut of the network can have much smaller DOF than $M$ and $N$ and in such cases allowing $Q$ to depend on $N$ can result in a much smaller gap. 

For example, in the diamond network with single-antenna at each node it is clear a priori that any cut of the network has at most two degrees of freedom, regardless of the number of relays, and therefore $d^*_Q\leq 2$ for any $Q$. It can be seen immediately from the above theorem that choosing $Q=N$ in this case results in a gap logarithmic in $N$ \cite{CO14}, which compares favorably with a gap that is linear in $N$. Similarly, for the fast-fading layered network with $K$ single-antenna nodes per layer defined in Section~\ref{subsec:fast_layered}, we show in Section~\ref{sec:fast_layered} that $d^*_Q\leq K$ for any $Q$. If there are $D$ layers in the network so that $N=M=KD$, the above expression tells us that choosing $Q$ to be proportional to $D$ gives a gap that is logarithmic in $D$ instead of linear in $D$. In Section~\ref{sec:stat_layered}, we demonstrate yet another setting in which applying Theorem~\ref{thm:new_main} and choosing $Q$ to be proportional to the number of layers allows us to obtain an improved gap. This demonstrates that the rule of thumb in the current literature to quantize received signals at the noise level ($Q\approx 1$) can be highly suboptimal.

Theorems \ref{thm:fast_layered} and \ref{thm:stat_layered} stated below provide formally the results that are mentioned in the preceding paragraph.

\begin{theorem}\label{thm:fast_layered}
The capacity $C$ of the fast-fading layered network described in Section~\ref{subsec:fast_layered} satisfies 
\begin{equation}\label{eq:thm_fast}
\ol{C} \geq C \geq \ol{C}-K\log D-K .
\end{equation}
\end{theorem}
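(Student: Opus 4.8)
The plan is to run the noisy-network-coding argument behind Theorem~\ref{thm:new_main} for this specific network, where two features — the isotropy of i.i.d.\ Rayleigh fading and the layered topology — make the bound sharper than the generic one. Applying Theorem~\ref{thm:new_main} as a black box with $M=N=KD$ and the bound $d_0^*,d_Q^*\leq K$ (established below) already gives a gap of order $2K\log D+K$ after optimizing $Q\approx D$; the extra factor of two sits in the input-correlation term $d_0^*\log(1+M/d_0^*)$, which I claim vanishes here. Removing it and accounting for the compression penalty more carefully is what yields the stated $K\log D+K$.

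The first step is to pin down the cutset bound. Since neither the source nor the relays have channel state information, the input distribution in \eqref{eq:cutset_prelim} cannot depend on $H$, while $H$ is independent of the inputs and known at the destination, so each cut value equals $\EE_H\!\left[I(X_\L;Y_{\L^c}\mid X_{\L^c},H)\right]$. For a fixed realization $\bH_{\L\rightarrow\L^c}(H)$ has i.i.d.\ $\C\N(0,1)$ entries, and by the isotropy of i.i.d.\ Rayleigh fading the ergodic mutual information of such a MIMO cut is maximized by the input $\C\N(0,PI)$, which is feasible under the per-antenna power constraint and is simultaneously optimal for \emph{every} cut. Hence i.i.d.\ Gaussian inputs attain the $\sup$--$\min$ in \eqref{eq:cutset_prelim}, so $\ol C=\min_{\L}\EE_H[C^{i.i.d.}_0(\L)]$ (note the order of $\min$ and $\EE_H$) and the input-correlation penalty is identically zero for this network.

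The second step is the degrees-of-freedom bound $d_Q^*\leq K$. The key claim is that the cut $\L^*_Q$ minimizing $\EE_H[C^{i.i.d.}_Q(\cdot)]$ may be taken to be a layer cut $\V^i=\V_0\cup\dots\cup\V_i$: in a layered network a cut that splits several layers crosses several layer boundaries, and its value aggregates the essentially independent sub-channels across each crossed boundary, so it cannot be smaller than that of an appropriate single-boundary cut. For a layer cut only $\V_i$ transmits and only $\V_{i+1}$ receives across it, so $\bH_{\V^i\rightarrow(\V^i)^c}$ is $K\times K$ and $d_Q^*=\mathrm{rank}\,\bH_{\L^*_Q\rightarrow(\L^*_Q)^c}\leq K$. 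I expect this to be the main obstacle: showing rigorously — uniformly in $Q$ and for the \emph{ergodic} quantity $\EE_H[C^{i.i.d.}_Q(\cdot)]$ — that the minimizer is a layer cut, rather than merely plausible. I would attempt either a rounding/submodularity argument that maps an arbitrary cut to a layer cut of no-larger value, or a block-triangular factorization of $\bH_{\L\rightarrow\L^c}$ that lower-bounds $C^{i.i.d.}_Q(\L)$ by its worst single-boundary contribution.

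The final step is to write the rate \eqref{eq:nnc_prelim} with i.i.d.\ Gaussian codebooks and Gaussian quantization of variance $Q\sigma^2$ at each relay, treating $\{Y_d,H\}$ as the destination's observation. The forward term of each cut becomes $\EE_H[C^{i.i.d.}_Q(\L)]$, since the quantization noise raises the effective receiver noise to $(Q+1)\sigma^2$, and the compression term contributes $\log(1+1/Q)$ per relay receive antenna inside $\L$, of which there are at most $K(D-1)$. Combining this uniform compression bound with $\EE_H[C^{i.i.d.}_Q(\L^*_Q)]\geq \ol C-d_Q^*\log(Q+1)$ — which follows from the per-eigenvalue estimate $\log\frac{1+P\lambda/\sigma^2}{1+P\lambda/((Q+1)\sigma^2)}\leq\log(Q+1)$ together with $d_Q^*\leq K$ — gives
\[
C\;\geq\;R_{\NNC}\;\geq\;\ol C-K\log(Q+1)-K(D-1)\log\!\left(1+\tfrac1Q\right).
\]
Choosing $Q=D-1$ makes the first penalty exactly $K\log D$ and, using $\log(1+x)\leq x$, bounds the second by $K(D-1)\cdot\frac{1}{D-1}=K$, which establishes $C\geq\ol C-K\log D-K$.
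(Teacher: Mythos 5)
Your overall architecture is the same as the paper's: noisy network coding with i.i.d.\ Gaussian inputs and quantization noise variance $Q\sigma^2$ with $Q=D-1$; the observation that Rayleigh isotropy makes i.i.d.\ inputs optimal for the cutset bound, which is what kills the $d_0^*\log\left(1+M/d_0^*\right)$ beamforming term of Theorem~\ref{thm:new_main}; the compression penalty bounded by $K(D-1)\log\left(1+\tfrac1Q\right)\leq K$; and a $K\log(Q+1)$ quantization loss on a rank-$K$ min cut. Your treatment of the cutset bound is in fact marginally cleaner than the paper's Lemma~\ref{lem:cutset}: you argue that one input distribution is simultaneously optimal for \emph{every} cut, so $\sup$--$\min$ equals $\min$--$\sup$ and $\ol{C}=\min_{\L}\EE\left[C^{i.i.d.}_0(\L)\right]$ directly, whereas the paper routes this through the specific cut $\V^0$ and therefore invokes Lemma~\ref{lem:dq_fastfading} even at that stage.

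However, there is a genuine gap, and you flagged it yourself: the claim that the cut minimizing $\EE\left[C^{i.i.d.}_Q(\cdot)\right]$ may be taken to be a layer cut (equivalently, that $\min_{\L}\EE\left[C^{i.i.d.}_Q(\L)\right]=\EE\left[C^{i.i.d.}_Q(\V^0)\right]$, whence $d_Q^*\leq K$) is asserted on intuition only, and the intuition as stated does not suffice. For a cut $\L$ taking $M_i$ nodes from layer $i$, the matrix $\bH_{\L\rightarrow\L^c}$ is block diagonal and the cut value decomposes as $f_Q(M_1^c,K)+f_Q(M_2^c,M_1)+\dots+f_Q(K,M_{D-1})$ in the notation of \eqref{eq:f_Q}; each summand (a partial boundary crossing) can individually be far smaller than $f_Q(K,K)$, so ``aggregating essentially independent sub-channels'' does not by itself show the sum dominates a single full-boundary cut. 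The paper closes exactly this step in Appendix~\ref{app:d_Q} by a telescoping argument using three properties of $f_Q$: symmetry, monotonicity in each argument, and the Hadamard-based superadditivity $f_Q(x,y)+f_Q(K-x,y)\geq f_Q(K,y)$, which together collapse any cut's value down to $f_Q(K,K)$ uniformly in $Q$. This is precisely the ``rounding argument that maps an arbitrary cut to a layer cut of no-larger value'' you proposed but did not carry out; without it, both your degrees-of-freedom bound $d_Q^*\leq K$ and the ensuing inequality $\EE\left[C^{i.i.d.}_Q(\L_Q^*)\right]\geq\ol{C}-K\log(Q+1)$ are unsupported, and this lemma carries the main technical content of the theorem.
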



Theorem~\ref{thm:fast_layered} follows from evaluating the required quantities in the expression in Theorem~\ref{thm:new_main} for the setup in Section~\ref{subsec:fast_layered}. However, directly applying the result of Theorem~\ref{thm:new_main} for this setup yields a gap of $2K\log D + K$. It turns out that we can further tighten the gap to $K\log D + K$ based on the observation that for this setup, the cutset bound can be evaluated explicitly and the optimal channel input distribution turns out to be independent across the antennas. The detailed proof appears in Section~\ref{subsec:fast}~and~\ref{subsec:fast_tight}.

The following corollary extends the result of Theorem~\ref{thm:fast_layered} to the setup considered in \cite{NNW11}. In this setup, instead of a single $K$-antenna source, there are $K$ single-antenna sources $\{s_1,s_2\dots, s_K\}$ interested in communicating with the destination, as depicted in Figure~\ref{fig:layered_multisrc}. We show that Theorem~\ref{thm:fast_layered} also implies a similar result for the sum-capacity $C$ of this network.
\begin{corollary}\label{cor:multisrc_fast}
The sum-capacity $C$ of the network in Figure~\ref{fig:layered_multisrc} satisfies
\begin{equation}\label{eq:thm_fast_multisrc}
\ol{C} \geq C \geq \ol{C}-K\log D-K.
\end{equation}
\end{corollary}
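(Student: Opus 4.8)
The plan is to reduce the sum-rate analysis of the $K$-source network to the single-source computation already performed for Theorem~\ref{thm:fast_layered}. The only difference between the two networks is that the $K$ transmit antennas of the source $s$ are split into $K$ independent single-antenna sources $s_1,\dots,s_K$, each carrying its own message under the same per-antenna power $P$, while the relay layers and the $K$-antenna destination are unchanged. First I would observe that the cuts relevant to the sum rate, namely those $\L$ with $\{s_1,\dots,s_K\}\subseteq\L$ and $d\in\L^c$, are in bijection with the single-source cuts $\{\L: s\in\L,\, d\in\L^c\}$, and that corresponding cuts induce identical MIMO matrices $\bH_{\L\rightarrow\L^c}$. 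Hence, for any input distribution that is independent across the sources, the sum-rate cutset value $\min_{\L\supseteq\{s_1,\dots,s_K\}}I(X_\L;Y_{\L^c}\mid X_{\L^c})$ equals the single-source cut value at the same distribution. Since the proof of Theorem~\ref{thm:fast_layered} shows that the single-source cutset bound is attained by i.i.d.\ $\C\N(0,PI)$ inputs, which are already independent across antennas and therefore feasible here, the two cutset bounds coincide, giving a common value $\ol{C}$; the upper bound $\ol{C}\ge C$ is then just the cutset bound for the multi-source network.

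For the lower bound I would run noisy network coding for multi-source multicast \cite{LKEC11} with exactly the input and quantization distribution used to prove Theorem~\ref{thm:fast_layered} (i.i.d.\ Gaussian inputs and quantizers $\hat Y_k$ at resolution $Q$ proportional to $D$). Each single-antenna source $s_i$ plays the role of one source antenna and encodes its own message with its own codebook, which is legitimate precisely because the scheme does not rely on joint encoding across the source antennas. For every cut $\L$ containing all sources, the NNC cut value $I(X_\L;\hat Y_{\L^c}\mid X_{\L^c})-I(Y_\L;\hat Y_\L\mid X_\N,\hat Y_{\L^c})$ is identical to the single-source NNC cut value, for which Section~\ref{sec:fast_layered} already establishes a per-cut gap of at most $K\log D+K$ from the cutset value. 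By the bijection above, taking the minimum over all all-source cuts then yields an achievable sum rate of at least $\ol{C}-K\log D-K$.

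The step I expect to be the main obstacle is passing from ``each all-source cut value is large'' to ``the total sum rate $\sum_{i=1}^K R_i$ is achievable.'' The NNC region imposes a constraint on every partial sum $\sum_{i:s_i\in\L}R_i$ arising from cuts $\L$ that contain only a subset of the sources, and one must exhibit a rate allocation meeting all of these while summing to $\min_{\L\supseteq\{s_1,\dots,s_K\}}(\text{cut value})$. I would resolve this by invoking the submodularity of the cut-value function $\L\mapsto I(X_\L;\hat Y_{\L^c}\mid X_{\L^c})-I(Y_\L;\hat Y_\L\mid X_\N,\hat Y_{\L^c})$, which makes the NNC region a polymatroid whose maximum sum rate equals the full min-cut value; the symmetry of the layered fast-fading network, in which every source enters the first relay layer in a statistically identical way, additionally guarantees that the symmetric allocation $R_i=\tfrac1K\min_{\L\supseteq\{s_1,\dots,s_K\}}(\text{cut value})$ is admissible. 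Combining the matching cutset bound with this achievable sum rate gives $\ol{C}\ge C\ge\ol{C}-K\log D-K$, as claimed.
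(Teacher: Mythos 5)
Your upper bound and your reduction of the all-source cuts to the single-source analysis of Theorem~\ref{thm:fast_layered} are fine, and you have correctly located the crux: the multi-source NNC region of \cite{LKEC11} also constrains every partial sum $\sum_{i:s_i\in\L}R_i$ through cuts containing only $k<K$ sources, and one must show these constraints do not bind at the symmetric point. However, your resolution of this crux has a genuine gap: it assumes the inequality it is supposed to establish. Write $\gamma(k)$ for the minimum NNC cut value over cuts containing exactly $k$ sources (by symmetry of the fading statistics this depends only on $k$). Feasibility of your symmetric allocation $R_i=\frac{1}{K}\gamma(K)$ is \emph{exactly} the family of inequalities $\gamma(k)\geq\frac{k}{K}\gamma(K)$ for all $k<K$; symmetry alone gives nothing beyond the fact that $\gamma$ is well defined, so ``symmetry guarantees admissibility'' is not an argument. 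The polymatroid route also does not close this gap as stated. First, submodularity of the NNC cut-value function $\L\mapsto I(X_\L;\hat Y_{\L^c}|X_{\L^c},H)-I(Y_\L;\hat Y_\L|X_\N,\hat Y_{\L^c},H)$ is asserted without proof; it is in fact true, but only after one observes that for the Gaussian quantizers this expression collapses to $h(\hat Y_{\L^c}|X_{\L^c},H)$ plus a modular function of $\L$ and a constant, so that the ADT-type submodularity lemma for independent inputs \cite{ADT11} applies. Second, and more importantly, even granting submodularity, the chain ``$\gamma$ concave, hence $\gamma(k)\geq\frac{k}{K}\gamma(K)+\frac{K-k}{K}\gamma(0)\geq\frac{k}{K}\gamma(K)$'' (equivalently, the claim that the maximum sum rate of the region equals $\gamma(K)$) requires $\gamma(0)\geq 0$, i.e., that cuts containing \emph{no} sources have nonnegative NNC value. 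This fails in general: for a source-free cut $\L$ the subtracted term $I(Y_\L;\hat Y_\L|X_\N,\hat Y_{\L^c},H)$ is bounded away from zero (it is roughly $|\L|\log(1+\tfrac{1}{D-1})$), while the forward term $I(X_\L;\hat Y_{\L^c}|X_{\L^c},H)$ vanishes as $P/\sigma^2\to 0$, so source-free cut values can be negative and the polymatroid argument collapses.

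The paper closes precisely this hole by direct verification (Appendix~\ref{app:cor}): it first reduces to the case $f_0(K,K)-K\log D-K>0$ (otherwise there is nothing to prove), and then shows, cut by cut, that every constraint arising from a cut with $k<K$ sources is looser than the $k=K$ constraint. The two essential ingredients are a generalized Hadamard inequality (Claim~\ref{claim:gen_hadamard}, $\frac{K}{k}f_Q(l,k)\geq f_Q(l,K)$, proved via \cite[eq.~(3.15)]{H78}) and a nonnegativity estimate for the relay portion of each cut, which itself uses Claim~\ref{claim:gen_hadamard} together with the standing positivity assumption. If you wanted to rescue your polymatroid route, you would need exactly these same estimates---Claim~\ref{claim:gen_hadamard} to prove $\gamma(0)\geq 0$ under the positivity assumption, plus an actual proof of submodularity---at which point the polymatroid machinery buys nothing over the paper's direct case analysis. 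As written, your final paragraph replaces the required quantitative comparison of cut values by an appeal to symmetry that cannot supply it.
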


The proof of Corollary~\ref{cor:multisrc_fast} appears in Section~\ref{subsec:proof_cor_multisrc}.

\begin{figure}[!ht]
\centering
\includegraphics[scale=1]{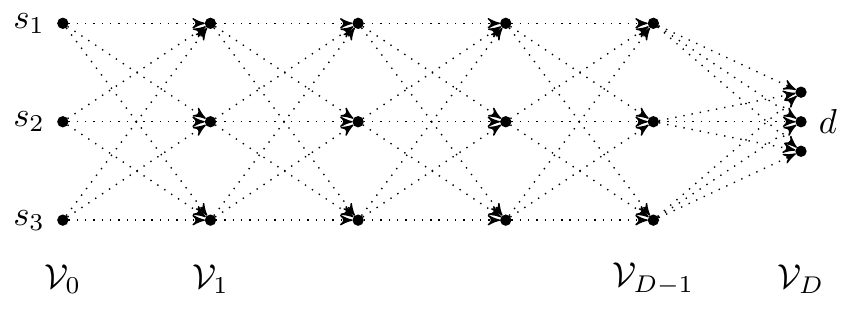}
\caption{Fast-Fading Layered Network with multiple sources}
\label{fig:layered_multisrc}
\end{figure}

The following theorem states the result for the static layered network setup, and the proof is given in Section~\ref{sec:stat_layered}.

\begin{theorem}\label{thm:stat_layered}
For $K\geq 2$ and $D\geq 2$, the capacity $C$ of the layered network described in Section~\ref{subsec:stat_layered} satisfies
\begin{equation}\label{eq:thm_stat}
\ol{C} \geq C\geq \ol{C} - 2K^2\log D  - K\log K - K .
\end{equation}
\end{theorem}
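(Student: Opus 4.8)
The plan is to derive Theorem~\ref{thm:stat_layered} from the general Theorem~\ref{thm:new_main} by carefully bounding the three gap terms for the static layered network, with the key work being to control the DOF quantities $d_0^*$ and $d_Q^*$ and to make a good choice of the free parameter $Q$.

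First I would set up the parameters. For this network we have $N = M = KD$ (each of the $D$ layers beyond the source carries $K$ receive antennas, symmetrically for transmit antennas), so the middle gap term $N/Q = KD/Q$ is the dominant obstruction unless $Q$ is chosen to grow with $D$. The natural choice, mirroring the fast-fading case, is to take $Q$ proportional to $D$ — concretely something like $Q = D$ — so that $N/Q = K$ becomes a constant. The first term $d_0^*\log(1 + M/d_0^*)$ and the third term $d_Q^*\log(Q+1)$ then both depend on the DOF bounds, and with $Q = D$ the third term becomes $d_Q^*\log(D+1)$.

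Next I would bound the degrees of freedom of the min cut. The crucial structural fact is that every cut in this layered network factors through a single layer: the induced MIMO matrix $\bH_{\L\rightarrow\L^c}$ for any cut $\L$ can carry information only through the inter-layer links, and since each layer has exactly $K$ nodes, the rank of any such matrix is at most $K$. Hence $d_0^* \leq K$ and $d_Q^* \leq K$ for every $Q$, exactly as claimed for the fast-fading network. Substituting these bounds, the first term is at most $K\log(1 + KD/K) = K\log(1+D)$ and the third term is at most $K\log(D+1)$; together with $N/Q = K$ this would already give roughly $2K\log(D+1) + K$. \emph{The gap in Theorem~\ref{thm:stat_layered}, however, has a leading coefficient $2K^2$ rather than $2K$,} which tells me the naive substitution is \textbf{not} what is happening here, and this is the main obstacle: the cutset bound $\ol{C}$ must itself be analyzed, because the point of a static unit-magnitude network is that the i.i.d.\ Gaussian input distribution used to \emph{achieve} rates in \eqref{eq:C_iid} need not match the input distribution \emph{optimizing} the cutset bound $\ol{C}$.

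So the hard part is relating the achievable $C^{i.i.d.}_Q(\L)$ to the true cutset value $\ol{C}(\L)$ for arbitrary unit-magnitude channel matrices, where no fading averaging smooths things out and a worst-case phase configuration could in principle concentrate or cancel signal energy. I would therefore bound, per cut, the loss incurred by using i.i.d.\ inputs at noise level $(Q+1)\sigma^2$ against the capacity-achieving inputs at noise level $\sigma^2$; since each of the $K\times K$ inter-layer blocks has entries of magnitude one, the singular values of $\bH_{\L\rightarrow\L^c}$ are controlled by crude norm bounds (each block has Frobenius norm $K$, giving operator-norm bounds on the order of $K$), and it is here that an extra factor of $K$ enters the log-det estimate, producing the $K^2$ coefficient. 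Once the per-cut comparison $\ol{C}(\L) - C^{i.i.d.}_Q(\L) \le 2K^2\log D + K\log K$ type bound is established uniformly over cuts, I would plug $Q = D$ back in, collect the three terms, and verify the stated $-2K^2\log D - K\log K - K$ gap, invoking the hypotheses $K \ge 2$ and $D \ge 2$ to absorb lower-order constants and to keep the logarithms well-defined and positive.
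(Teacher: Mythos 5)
Your reduction to Theorem~\ref{thm:new_main} is the right starting point, but the structural claim it rests on is false. In a layered network the cut matrix $\bH_{\L\rightarrow\L^c}$ is block-diagonal, with one block $\bH_{(\V_i\cap\L)\rightarrow(\V_{i+1}\cap\L^c)}$ per layer, and its rank is the \emph{sum} of the block ranks; a staggered cut containing about half the nodes of each layer has rank of order $KD$, not $K$. So the bound $d_Q^*\leq K$ is simply not available here. (In the fast-fading case $d_Q^*=K$ holds for a different reason: Lemma~\ref{lem:dq_fastfading} uses the i.i.d.\ channel statistics to show that the \emph{minimizing} cut is the single-layer cut $\V^0$; no such symmetry exists when the phases are arbitrary, and the minimizing cut may cross many layers.) Having correctly noticed that your numbers would give $2K\log D$ rather than $2K^2\log D$, you then misdiagnose the source of the $K^2$: it is not the mismatch between the cutset bound's optimal inputs and i.i.d.\ inputs --- that beamforming loss is already absorbed into the term $d_0^*\log\left(1+M/d_0^*\right)$ of Theorem~\ref{thm:new_main} --- and the uniform per-cut bound $\sup_{p}\ol{C}(\L)-C^{i.i.d.}_Q(\L)\leq 2K^2\log D+K\log K$ you propose is false: with $Q$ of order $D$, the loss from raising the noise level scales as the rank of $\bH_{\L\rightarrow\L^c}$ times $\log D$, which for the high-rank cuts above is of order $KD\log D$. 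Frobenius or operator-norm estimates on the unit-magnitude blocks cannot repair this, because the log-det loss is governed by rank, not by singular-value size.

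The missing idea is an argument about which cuts can be near-minimal, which is how the paper proceeds. Let $\A$ be the class of cuts whose crossing links come from at most $K-1$ layers. By Hadamard's inequality and the unit-magnitude gains, $C^{i.i.d.}_Q(\V_0)\leq K\log\left(1+\frac{P}{(Q+1)\sigma^2}\right)+K\log K$ (Lemma~\ref{lem:CV0}), while any cut $\L\notin\A$ has at least $K$ nonzero blocks and hence $C^{i.i.d.}_Q(\L)\geq K\log\left(1+\frac{P}{(Q+1)\sigma^2}\right)$; therefore restricting the minimization to $\A$ costs at most $\kappa=K\log K$ (Lemma~\ref{lem:ind}). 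For $\L\in\A$ the matrix $\bH_{\L\rightarrow\L^c}$ has at most $K(K-1)$ nonzero columns, so its rank --- and hence $\tilde{d}_Q^*$ of Remark~\ref{remark:classA} --- is at most $K^2$; this is where the $K^2$ coefficient genuinely comes from. Plugging $\tilde{d}_Q^*\leq K^2$, $\kappa=K\log K$, $M=N=K(D-1)$ and $Q=D-1$ into the gap \eqref{eq:gapA} of Remark~\ref{remark:classA} yields exactly $2K^2\log D+K\log K+K$. Without this restriction-to-$\A$ step, your argument cannot close.
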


\section{Line Network}\label{sec:line}
We first illustrate the main idea of this paper in a simple setting, the line network in Figure~\ref{fig:line}. Here we assume that each link $i$ is a AWGN channel with gain $h_i$ and the channel gains $h_i$ are fixed and known. Each node has power $P$ and the noise variance is $\sigma^2$. (The conclusions below also hold under a fast-fading assumption similar to the one described in Section~\ref{sec:model}.) It is clear that a decode-forward strategy at the relays achieves the capacity of this line network, while compress-and-forward based strategies (such as quantize-map-forward in \cite{ADT11} and noisy network coding in \cite{LKEC11}) with quantization done at the noise level have a gap to capacity that is linear in the number of nodes $D$. Here, we show that if relays instead quantize at resolution $(D-1)$ times the noise level, the gap to capacity becomes logarithmic in $D$. 

Number the nodes $s$ through $d$ as $0,1,2,\dots,D$. Let's consider the rate achievable by noisy network coding  for this network, assuming all relay nodes choose their transmission codebooks independently from a Gaussian distribution, i.e. $X_i\sim\C\N(0,P)$ and independent of each other. As described in Section~\ref{subsec:background}, the rate 
$$\min_{0\leq i\leq D-1} \left(I(X_i;\hat{Y}_{i+1}|X_{i+1})-I(Y_{\V^i};\hat{Y}_{\V^i}|X_{\N},\hat{Y}_{\N\setminus\V^i})\right),$$ 
is achievable, where $\V^i=\{0,\dots,i\}$, and each relay chooses $\hat{Y}_i=Y_i+\hat{Z}_i$ where $\hat{Z}_i\sim\N(0,(D-1)\sigma^2)$ independent of everything else. Since $Y_{i+1}=h_iX_{i}+Z_{i+1}$, the channel from $X_i$ to $\hat{Y}_{i+1}$ is effectively an AWGN channel of noise power $D\sigma^2$ and gain $h_i$. Then the first term in the achievable rate expression becomes $\log\left(1+\frac{|h_i|^2P}{D\sigma^2}\right)$ which is greater than or equal to $\log\left(1+\frac{|h_i|^2P}{\sigma^2}\right)-\log(D)$.

Due to the coarse quantization, the second term in the achievable rate expression is reduced significantly as compared to quantizing at the noise level.  We have
\begin{align*}
I(Y_{\V^i};\hat{Y}_{\V^i}|X_{\N},\hat{Y}_{\N\setminus\V^i})&=I(Z_{\V^i}; \{Z+\hat{Z}\}_{\V^i})\\
&=(|\V^i|-1)\log\left(1+\frac{\sigma^2}{(D-1)\sigma^2}\right)\\
&=i\log\left(1+\frac{1}{D-1}\right)\\
& \leq \frac{i}{D-1} \\
& \leq 1,
\end{align*}
since $i\leq D-1.$ Since the capacity of the line network is given by the minimum of the capacities of each link: $\min_i \log(1+|h_i|^2P)$, we see that decreasing the resolution of quantization as the number of nodes increases results in a gap of $\log(D)+1$ to capacity. If the quantization were done at the noise level, the first term in the noisy network coding achievable rate would  suffer from only a $\log(2)$ decrease instead of $\log(D)$ with respect to capacity, however the second term would be linear in $D$, overall resulting in a gap to capacity that is linear in $D$.

At a first glance, coarser quantization resulting in better achievable rates might seem counter-intuitive. We discuss this in more depth in the following section.

\section{Gap to Capacity with Noisy Network Coding}\label{sec:nnc}
In this section, we discuss the elements of the gap between the rate achieved by noisy network coding (NNC) and the cutset bound and identify a trade-off between different elements of the gap. Our main result builds on the understanding of this trade-off.  

Consider an arbitrary discrete memoryless network with a set of nodes $\mathcal{N}$ where a source node $s$ wants to communicate to a destination node $d$ with the help of the remaining nodes acting as relays. As stated earlier in Section~\ref{subsec:background}, noisy network coding can achieve the rate given in \eqref{eq:nnc_prelim}. 
Comparing this with the cutset bound on the capacity of the network, 
\begin{equation}\label{eq:cutset}
\ol{C}=\sup_{p(x_\N)} \min_{\L:s\in\L,d\in\L^c} I(X_\L; Y_{\L^c}\,|\, X_{\L^c}),
\end{equation}
we observe the following differences. First, while the maximization in \eqref{eq:cutset} is over all possible input distributions, only independent input distributions are  admissible  in \eqref{eq:nnc_prelim}. 
\begin{figure*}[!th]
\normalsize
\begin{equation}\label{eq:NNC_refined}
\sup_{\prod_{i\in\mathcal{N}}
p(x_i)p(\hat{y}_i|y_i, x_i)}\sup_{\mathcal{M}\subseteq \mathcal{N}}\min_{\L\subseteq \mathcal{M}:s\in\L,d\in\mathcal{M}\setminus\L}\left( I(X_\L; \hat{Y}_{\L^c}|X_{\L^c}) - I(Y_\L; \hat{Y}_\L | X_\mathcal{M}, \hat{Y}_{\L^c})\right)
\end{equation}
\hrulefill
\end{figure*}
This gap corresponds to a potential beamforming gain that is allowed in the cutset bound but not exploited by NNC. Second, the first term in \eqref{eq:nnc_prelim} is similar to \eqref{eq:cutset} but with $Y_{\L^c}$ in \eqref{eq:cutset} replaced by  $\hat{Y}_{\L^c}$ in \eqref{eq:nnc_prelim}. This difference corresponds to a rate loss due to the quantization noise introduced by the relays. Third, there is the extra term $I(Y_\L; \hat{Y}_\L | X_\mathcal{N}, \hat{Y}_{\L^c})$ reducing the rate in \eqref{eq:nnc_prelim}. One way to potentially interpret this term would be as the rate penalty for communicating the quantized (compressed) observations $\hat{Y}_\L$ to the destination on top of the desired message. Note that this is the rate required to describe the observations $Y_\L$ at the distortion dictated by $\hat{Y}_\L$ to a decoder that already knows (or has decoded) $X_\mathcal{N},\hat{Y}_{\L^c}$.

However, it is not completely clear if this interpretation is precise because the non-unique decoder employed by NNC does not require the quantization indices to be explicitly decoded. The non-unique decoder of NNC searches for the unique source codeword that is jointly typical with some (not necessarily unique) set of quantization indices at the relays and the received signal at the destination. The following example in Figure~\ref{fig:1} illustrates that in certain cases the decoder can indeed recover the transmitted message even if it can not uniquely recover the quantization index of the relay. Even though we focus on the extremal case where the $r-d$ link is zero, the discussion extends to the case where this link is sufficiently weak.

\begin{figure}[!h]
	\squeezeup
	\centering
		\includegraphics{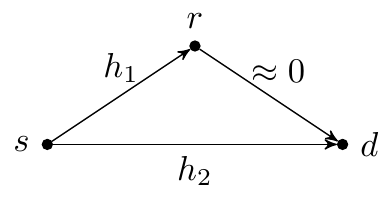}
\squeezeup
  \caption{Example\squeezeup\squeezeup}
	\label{fig:1}
\end{figure}
Consider the classical relay channel with  a very weak link from the relay to the destination. Clearly, as long as the source uses a codebook of rate less than the capacity of the direct link, no matter what the operation at the relay is, the destination can always decode the source message by performing a joint typicality test between its received signal and the source codebook (which is  subsumed by the non-unique typicality test of NNC). In particular, if the relay quantizes too finely, then there is no way for the destination to recover the relay's quantization index, even though the source message can still be recovered. 

On the other hand, this example reveals the following strange property of the expression in \eqref{eq:nnc_prelim}. While the above discussion reveals that in the setup of Fig.~\ref{fig:1}, the rate achieved by NNC is equal to the capacity of the direct link independent of the relay's operation (i.e. what $\hat{Y}_r$ is), the rate in \eqref{eq:nnc_prelim} is decreasing with increasing resolution for the quantization at the relay (due to the subtractive term $I(Y_\L; \hat{Y}_\L | X_\mathcal{N}, \hat{Y}_{\L^c})$). This suggests a more careful analysis of the rate achieved by NNC which leads to the improved rate given in \eqref{eq:NNC_refined} at the top of the next page. 
Here, only those relays that are in $\mathcal{M}\subseteq \mathcal{N}$ are considered in the non-unique typicality decoding, while the other relay transmissions are treated as noise. For example, for the relay channel in Figure~\ref{fig:1}, this would correspond to not considering the relay in the typicality decoding.

It has been shown in  \cite{WX13} that if $\mathcal{M}^*$ is the subset that maximizes \eqref{eq:NNC_refined} for a given $\prod_{i\in\mathcal{N}}
p(x_i)p(\hat{y}_i|y_i, x_i)$, then the quantization indices of the relays in $\mathcal{M}^*$ can be uniquely decoded at the destination, while the quantization indices of the relays in $\mathcal{N}\setminus \mathcal{M}^*$ cannot be decoded and in fact, it is optimal to treat the transmissions from these relays as noise. Since the transmissions from $\mathcal{N}\setminus \mathcal{M}^*$ are treated as noise, the expression \eqref{eq:NNC_refined} is increased if these relays are shut down. Hence, we can conclude that in the optimal distribution $\prod_{i\in\mathcal{N}}
p(x_i)p(\hat{y}_i|y_i, x_i)$ for NNC, some relays can be off (not utilized or equivalently always quantizing their received signals to zero) and some relays can be active, but the quantization indices of all relays (the active ones and trivially the inactive ones) can be uniquely decoded at the destination. Since the quantization indices are communicated to the destination together with the source message, there should be a rate penalty for communicating them which is precisely the term $I(Y_\L; \hat{Y}_\L | X_\mathcal{M}, \hat{Y}_{\L^c})$.

The above discussion reveals that NNC communicates not only the source message but also the quantization indices to the destination despite the non-unique typicality test performed at the decoder; and while making quantizations finer introduces less quantization noise in the communication, it leads to a larger rate penalty for communicating these quantization indices to the destination. This tradeoff is made explicit in Theorem~\ref{thm:new_main} which establishes the following achievable rate 
$$
\ol{C} - d_0^*\log\left(1+\frac{M}{d_0^*}\right) - \frac{N}{Q} - d_Q^*\log(Q+1),
$$
for any $Q\geq 0$. Here, the term $\frac{N}{Q}$ corresponds to the rate penalty associated with communicating the quantization indices and the term $d_Q^*\log(Q+1)$ corresponds to the rate penalty due to the quantization noise. Choosing a larger $Q$ increases the latter but decreases the former.

\section{Proof of Main Result}\label{sec:proof1}
In this section we prove Theorem~\ref{thm:new_main} by evaluating the rate achieved by noisy network coding in \eqref{eq:nnc_prelim} for a specific choice of the distribution $\prod_{k\in\mathcal{N}}p(x_k)p(\hat{y}_k|y_k, x_k)$ that satisfies the power constraint.
We choose the channel input vector at each node $j$ as $\mathbf{X}_j\sim \C\N\left(0,PI\right)$ and $\hat{Y}_k$ for each receive antenna in the network is chosen such that \begin{equation}\label{eq:hatY}\hat{Y}_k=Y_k + \hat{Z}_k \text{ where } \hat{Z}_k\sim\C\N(0,Q\sigma^2),\end{equation} independent of everything else, for some $Q\geq 0$. Then, the achievable rate stated in \eqref{eq:nnc_prelim} is given by
\begin{equation}\label{eq:nnc_thm1}
\min_{\L:s\in\L,d\in\L^c} \left(I(X_\L; \hat{Y}_{\L^c}|X_{\L^c}) - I(Y_\L; \hat{Y}_\L | X_\mathcal{N}, \hat{Y}_{\L^c})\right).
\end{equation}
This implies that the following rates are also achievable:
\begin{equation}\label{eq:nnc_minmax}\min_{\L:s\in\L,d\in\L^c} I(X_\L; \hat{Y}_{\L^c}|X_{\L^c}) - \max_{\L:s\in\L,d\in\L^c}I(Y_\L; \hat{Y}_\L | X_\mathcal{N}, \hat{Y}_{\L^c}).\end{equation} 

We first show that for the choice of the distribution for $X_j$'s and $\hat{Y}_k$'s in \eqref{eq:hatY}, we have $I(Y_\L; \hat{Y}_\L | X_\mathcal{N}, \hat{Y}_{\L^c}) \leq \frac{N}{Q}$ for all cuts $\Omega$ such that $s\in\L,d\in\L^c$, as follows.
\begin{IEEEeqnarray}{L}
I(Y_{\L};\hat{Y}_{\L}|X_{\N},\hat{Y}_{\L^c})\nonumber\\
 \quad =  h(\hat{Y}_{\L}|X_{\N},\hat{Y}_{\L^c})-h(\hat{Y}_{\L}|Y_{\L},X_{\N},\hat{Y}_{\L^c})\nonumber\\
\quad \stackrel{(a)}{=}  h(\hat{Y}_{\L}|X_{\N},\hat{Y}_{\L^c})-h(\hat{Y}_{\L}|Y_{\L},X_{\N})\nonumber\\
\quad \leq  h(\hat{Y}_{\L}|X_{\N})-h(\hat{Y}_{\L}|Y_{\L},X_{\N})\nonumber\\
\quad \stackrel{(b)}{=}   \left(\sum_{j\in\L}N_j\right)\log\left(Q+1\right) - \left(\sum_{j\in\L}N_j\right)\log\left(Q\right)\nonumber\\
\quad =  \left(\sum_{j\in\L}N_j\right)\log\left(1+\frac{1}{Q}\right)\nonumber\\
\quad \leq  \frac{N}{Q},\label{eq:NbyQ}
\end{IEEEeqnarray}
where both (a) and (b) follow due to our specific choice for the distribution  $\prod_{k\in\mathcal{N}}p(x_k)p(\hat{y}_k|y_k, x_k)$.
Hence,
\begin{equation}\label{eq:term2}
\max_{\L:s\in\L,d\in\L^c}I(Y_\L; \hat{Y}_\L | X_\mathcal{N}, \hat{Y}_{\L^c}) \leq \frac{N}{Q}.
\end{equation}

We now lower bound the first term in \eqref{eq:nnc_minmax}. Since $X_{\L}$ is chosen to be $\C\N(0,PI)$, the quantity $I(X_\L; \hat{Y}_{\L^c}|X_{\L^c})$ is equal to $C^{i.i.d.}_Q(\L)$, where $C^{i.i.d.}_Q(\L)$ is defined in \eqref{eq:C_iid}. Let $\L_Q^*$ denote the cut with minimal cut value as defined in \eqref{eq:Lq*}. Then,
\begin{IEEEeqnarray}{L}
\min_{\L:s\in\L,d\in\L^c} I(X_\L; \hat{Y}_{\L^c}|X_{\L^c})\nonumber\\
\quad =  \min_{\L:s\in\L,d\in\L^c} C^{i.i.d.}_Q(\L)\nonumber\\
\quad =  C^{i.i.d.}_Q(\L_Q^*)\nonumber\\
\quad \stackrel{(a)}{\geq}  C^{i.i.d.}_0(\L_Q^*) - d_Q^*\log(Q+1)\label{eq:KlogD}\\
\quad \stackrel{(b)}{\geq}   C^{i.i.d.}_0(\L_0^*) - d_Q^*\log(Q+1)\nonumber\\
\quad \stackrel{(c)}{\geq}  \sup_{p(x_\N)} I(X_{\L_0^*}; Y_{(\L_0^*)^c}\,|\, X_{(\L_0^*)^c})  \nonumber\\
\quad\quad -\> d_0^*\log\left(1+\frac{\sum_{i\in\L_0^*}M_i}{d_0^*}\right) - d_Q^*\log(Q+1)\label{eq:gapMIMO}\\
\quad \geq  \sup_{p(x_\N)} I(X_{\L_0^*}; Y_{(\L_0^*)^c}\,|\, X_{(\L_0^*)^c}) - d_0^*\log\left(1+\frac{M}{d_0^*}\right)\nonumber\\
\quad\quad -\> d_Q^*\log(Q+1)\nonumber\\
\quad =  \sup_{p(x_\N)}\min_{\L:s\in\L,d\in\L^c} I(X_\L; Y_{\L^c}\,|\, X_{\L^c}) -d_0^*\log\left(1+\frac{M}{d_0^*}\right) \nonumber\\
\quad\quad -\> d_Q^*\log(Q+1)\nonumber\\
\quad =  \ol{C}-d_0^*\log\left(1+\frac{M}{d_0^*}\right) - d_Q^*\log(Q+1)\label{eq:term1},
\end{IEEEeqnarray} where $(a)$ is justified by the following:
\begin{IEEEeqnarray}{L} 
C^{i.i.d.}_Q(\L_Q^*) \nonumber\\
\quad =  \log\det\left(I+\frac{P}{(Q+1)\sigma^2}\bH_{\L_Q^*\rightarrow(\L_Q^*)^c}\bH_{\L_Q^*\rightarrow(\L_Q^*)^c}^\dagger\right)\nonumber\\
\quad \geq  \log\det\left(I+\frac{P}{\sigma^2}\bH_{\L_Q^*\rightarrow(\L_Q^*)^c}\bH_{\L_Q^*\rightarrow(\L_Q^*)^c}^\dagger\right)\nonumber\\
\quad\quad -\> d_Q^*\log(Q+1)\nonumber\\
\quad =  C^{i.i.d.}_0(\L_Q^*) - d_Q^*\log(Q+1),\label{eq:dlogQ}
\end{IEEEeqnarray}
 $(b)$ follows by the definition of $\L_0^*$ and $(c)$ follows from \cite[Lemma 6.6]{ADT11} equation (144), which considers a MIMO channel with per-antenna power constraint and bounds the gap between its capacity and the largest achievable rate with no spatial coding, i.e. the rate achieved by using independent inputs at the antennas.

The proof of Theorem~\ref{thm:new_main} follows from \eqref{eq:term2} and \eqref{eq:term1}.\hfill\IEEEQED\\

We next state an observation which will be useful in Section~\ref{sec:stat_layered} when we analyze the static layered network.
\begin{remark}\label{remark:classA}
If there exists a set of cuts $\A$ such that $$\min_{\L:s\in\L,d\in\L^c}C^{i.i.d.}_Q(\L) \geq \min_{\substack{\L\in\A:\\ s\in\L,d\in\L^c}}C^{i.i.d.}_Q(\L) - \kappa$$ for all $Q$, where $\kappa$ is a constant, then the gap between the upper and the lower bound in Theorem~\ref{thm:new_main} can be potentially improved to 
\begin{equation}\label{eq:gapA}\tilde{d}_0^*\log\left(1+\frac{M}{\tilde{d}_0^*}\right) + \frac{N}{Q} + \tilde{d}_Q^*\log(Q+1) + \kappa,\end{equation}
where 
\begin{equation}\label{eq:dofA}\tilde{d}_Q^* \triangleq \textsc{DOF}\left(\argmin_{\substack{\L\in\A:\\ s\in\L,d\in\L^c}}C^{i.i.d.}_Q(\L)\right).\end{equation}
\end{remark}
This can be seen by modifying the proof of the lower bound \eqref{eq:term1} slightly as:
\begin{IEEEeqnarray}{L}
\min_{\L:s\in\L,d\in\L^c} I(X_\L; \hat{Y}_{\L^c}|X_{\L^c})\nonumber\\
\quad =  \min_{\L:s\in\L,d\in\L^c} C^{i.i.d.}_Q(\L) \nonumber\\
\quad\geq  \min_{\substack{\L\in\A:\\ s\in\L,d\in\L^c}}C^{i.i.d.}_Q(\L) - \kappa\nonumber\\
 \quad\geq  \min_{\substack{\L\in\A:\\ s\in\L,d\in\L^c}}C^{i.i.d.}_0(\L) - \tilde{d}_Q^*\log(Q+1) - \kappa\nonumber\\
\quad\geq  \ol{C}-\tilde{d}_0^*\log\left(1+\frac{M}{\tilde{d}_0^*}\right) - \tilde{d}_Q^*\log(Q+1) - \kappa,\nonumber
\end{IEEEeqnarray}
where each step follows by the same arguments in \eqref{eq:term1}.

\section{Fast-fading Layered Network}\label{sec:fast_layered}
In this section, we concentrate on the fast-fading layered network defined in Section~\ref{subsec:fast_layered} and obtain an approximation for the capacity of this network.

\subsection{Applying Theorem~\ref{thm:new_main} to the fast-fading layered network}\label{subsec:fast}
For the fast-fading setup, we assume that the destination knows all the instantaneous channel realizations in the network while the source and the relay nodes only know the statistics of the channel coefficients. We first note that under this assumption, the cutset bound and the noisy network coding rate can be expressed as follows.

\begin{itemize}
\item[-] Cutset Bound:\\
Noting that under the above assumption the effective received signal at the destination can be considered to be $(Y_d,H)$, where $H$ contains all the channel realizations in the network, the cutset bound in \eqref{eq:cutset_prelim} can be written as
\begin{equation}\label{eq:cutset_fast_fading}\ol{C} = \sup_{p(x_{\N})}\left(\min_{\L:s\in\L,d\in\L^c} \ol{C}(\L)\right),\end{equation}
where
\begin{IEEEeqnarray}{rCl}
\ol{C}(\L) & \triangleq & I(X_{\L};Y_{\L^c},H|X_{\L^c})\nonumber\\
& = & I(X_{\L};Y_{\L^c}|X_{\L^c},H)\nonumber
\end{IEEEeqnarray} since $X_{\N}$ is independent of $H$.

\item[-] Noisy Network Coding:\\
The rate achieved by noisy network coding is given by \eqref{eq:nnc_fastfading} given at the top of the next page, where we have again used the fact that $X_{\N}$ is independent of $H$.
\end{itemize}
\begin{figure*}[!th]
\normalsize
\begin{equation}\label{eq:nnc_fastfading}
R_{\NNC} = \sup_{\prod_{k\in\mathcal{N}}
p(x_k)p(\hat{y}_k|y_k, x_k)}\min_{\L:s\in\L,d\in\L^c} \left(I(X_\L; \hat{Y}_{\L^c}|X_{\L^c},H) - I(Y_\L; \hat{Y}_\L | X_\mathcal{N}, \hat{Y}_{\L^c},H) \right),
\end{equation}
\hrulefill
\end{figure*}

We now proceed to the proof of Theorem~\ref{thm:fast_layered}. We first note that by following similar steps as in the proof of Theorem~\ref{thm:new_main}, we can get the following result:
\begin{equation}\label{eq:thm1_thm2} \ol{C} \geq C \geq \ol{C}  - d_0^*\log\left(1+\frac{M}{d_0^*}\right) - \frac{N}{Q} - d_Q^*\log(Q+1),\end{equation} where 
$d_Q^*$ is now analogously defined as the expected degrees of freedom of the fast-fading MIMO channel corresponding to the cut $\L_Q^*$ that minimizes $\EE[C_Q^{i.i.d.}(\L)]$, which we express as
$$d_Q^*\triangleq \textsc{DOF}\left(\argmin_{\L:s\in\L,d\in\L^c}\EE\left[ C^{i.i.d.}_Q(\L)\right]\right),$$
and the expectation is with respect to the randomness in the channels. Note that when we proved Theorem~\ref{thm:new_main}, we defined $C^{i.i.d.}_Q(\L)$ to be the first mutual information term in the achievable rate for noisy network coding in \eqref{eq:nnc_thm1} when the input distributions $\mathbf{X}_j$ are i.i.d. $\C\N\left(0,PI\right)$ and $\hat{Y}_k$'s are chosen according to \eqref{eq:hatY}. In the current fast-fading case the first mutual information term in the achievable rate for noisy network coding in \eqref{eq:nnc_fastfading} is equal to $\EE[C_Q^{i.i.d.}(\L)]$ under the same distribution for the  $\mathbf{X}_j$'s and $\hat{Y}_k$'s. Therefore, the proof of Theorem~\ref{thm:new_main} can be applied verbatim in the current case by only modifying the definition of $d_Q^*$ accordingly.

Now, by choosing $Q$ to be equal to $Q'=D-1$, we get that 
\begin{IEEEeqnarray*}{rCl}
C & \geq & \ol{C} - d_0^*\log\left(1+\frac{M}{d_0^*}\right) - \frac{N}{Q'} - d_{Q'}^*\log(Q'+1)\\
& = & \ol{C} - d_0^*\log\left(1+\frac{K(D-1)}{d_0^*}\right) - \frac{K(D-1)}{Q'}\\
&& \quad -\> d_{Q'}^*\log(Q'+1)\\
& \stackrel{(a)}{=} & \ol{C} - K\log\left(1+\frac{K(D-1)}{K}\right) - \frac{K(D-1)}{Q'}\\
&&\quad -\> K\log(Q'+1)\\
& \stackrel{(b)}{\geq} & \ol{C} - K\log D - K - K\log D,\\
& = & \ol{C} - 2K\log D - K,
\end{IEEEeqnarray*}
where
\begin{itemize}
\item[-] $(a)$ follows from Lemma~\ref{lem:dq_fastfading}, provided below, which states that $d_Q^* = K$ for any $Q\geq 0$; and
\item[-] $(b)$ follows since $Q'=D-1.$
\end{itemize}
Thus, we have characterized the capacity of the fast-fading layered network within a gap of $2K\log D + K$. The next subsection describes how this result can be tightened to obtain a gap equal to $K\log D + K,$ which will conclude the proof of Theorem~\ref{thm:fast_layered}.

\begin{lemma}\label{lem:dq_fastfading}
For the fast-fading layered network,  we have for any $Q\geq 0$,
$$\min_{\L:s\in\L,d\in\L^c} \EE\left[C^{i.i.d.}_Q(\L)\right] = \EE\left[C^{i.i.d.}_Q(\V^0)\right],$$
which implies
$$d^*_Q = K.$$
\end{lemma}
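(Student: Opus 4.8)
The plan is to exploit the layered structure to decompose every cut value into independent per-boundary contributions, reduce the lemma to a scalar optimization over the ``profile'' of the cut, and then identify the source cut $\V^0=\{s\}$ as the minimizer. First I would parametrize an arbitrary cut $\L$ (with $s\in\L$, $d\in\L^c$) by the numbers $a_i\triangleq|\L\cap\V_i|$ of layer-$i$ transmit antennas it contains, so that $a_0=K$ (all of $s$'s antennas) and $a_D=0$ (none of $d$'s). Because the network is layered, the only coefficients appearing in $\bH_{\L\rightarrow\L^c}$ are those from a layer-$i$ node in $\L$ to a layer-$(i+1)$ node in $\L^c$; links never skip a layer. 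Ordering the transmit antennas in $\L$ and receive antennas in $\L^c$ by layer, $\bH_{\L\rightarrow\L^c}$ is, up to a permutation of rows and columns, block diagonal, with the $i$th block $G_i$ the $(K-a_{i+1})\times a_i$ submatrix of the $K\times K$ fading matrix between layers $i$ and $i+1$ selected by the active rows/columns. Hence the determinant factors over blocks, and taking expectations gives
\[
\EE\!\left[C^{i.i.d.}_Q(\L)\right]=\sum_{i=0}^{D-1} g(K-a_{i+1},a_i),\qquad g(p,q)\triangleq \EE\!\left[\log\det\!\left(I+\gamma\, GG^\dagger\right)\right],
\]
where $\gamma\triangleq\frac{P}{(Q+1)\sigma^2}$ and $G$ is a $p\times q$ matrix with i.i.d. $\C\N(0,1)$ entries; by exchangeability of the entries of the fading matrices this expectation depends only on $(p,q)$, so $g$ is well defined.

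Next I would record three properties of $g$: symmetry $g(p,q)=g(q,p)$ (since $\log\det(I+\gamma GG^\dagger)=\log\det(I+\gamma G^\dagger G)$), the boundary values $g(0,q)=g(p,0)=0$, and concavity of $g$ in each argument for the other fixed. Granting concavity together with $g(0,\cdot)=0$, the chord bound $g(x,q)\ge\frac{x}{K}\,g(K,q)$ holds for integer $0\le x\le K$; applying it in both coordinates yields the per-boundary estimate
\[
g(K-a_{i+1},a_i)\ \ge\ \frac{(K-a_{i+1})\,a_i}{K^2}\,g(K,K).
\]
Summing over $i$ and invoking the elementary telescoping inequality $(K-a_{i+1})a_i\ge K(a_i-a_{i+1})$ (equivalent to $a_{i+1}(K-a_i)\ge0$) with $a_0=K$, $a_D=0$ gives $\sum_{i=0}^{D-1}(K-a_{i+1})a_i\ge K(a_0-a_D)=K^2$, hence $\EE[C^{i.i.d.}_Q(\L)]\ge g(K,K)$ for every cut. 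Since the source cut $\V^0$ has profile $a_0=K$, $a_1=\dots=a_{D-1}=0$, its only nonzero block is the full $K\times K$ matrix feeding layer $1$, so $\EE[C^{i.i.d.}_Q(\V^0)]=g(K,K)$; thus $\V^0$ attains the minimum, proving the first displayed identity. Finally, the induced MIMO matrix of the $\V^0$ cut is that $K\times K$ i.i.d. Gaussian matrix, which has rank $K$ almost surely, so $d_Q^*=K$.

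I expect the concavity of $g$ in the number of antennas to be the main obstacle, the remaining steps being either structural (the block decomposition) or elementary (the telescoping sum). To establish it I would couple the antenna counts: writing the $(m+1)\times n$ matrix as the $m\times n$ matrix with one appended row, one finds $g(m+1,n)-g(m,n)=\EE[\log(1+\gamma\, v^\dagger M_m^{-1} v)]$ with $M_m\triangleq I_n+\gamma G^\dagger G$ and $v\sim\C\N(0,I_n)$ independent of $G$. Appending a further row only adds a positive semidefinite term to $M_m$, so $M_m^{-1}$ is nonincreasing in $m$ in the semidefinite order; therefore the increments $g(m+1,n)-g(m,n)$ are nonincreasing in $m$, which is exactly discrete concavity (and, by symmetry, concavity in the second argument).

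One point to handle with care is the degenerate/empty blocks that arise when some $a_i=0$ or $K-a_{i+1}=0$: these contribute a $0\times q$ or $p\times 0$ factor and so add nothing, consistently with $g(0,\cdot)=g(\cdot,0)=0$, so they require no separate treatment. A minor bookkeeping check is that the source and destination are multi-antenna single nodes, so $a_0=K$ and $a_D=0$ are forced (their antennas cannot be split across the cut), which is what makes $\V^0$ feasible and anchors the telescoping bound; and since tightness of the chord bounds forces each $a_i\in\{0,K\}$ with a single $K\!\to\!0$ transition, every minimizing cut is a layer cut $\V^j$ whose crossing matrix is $K\times K$, confirming that $d_Q^*=K$ is unambiguous.
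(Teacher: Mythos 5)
Your proposal is correct, and it shares the paper's structural core --- the observation that $\bH_{\L\rightarrow\L^c}$ is block diagonal across layer boundaries, so that $\EE\left[C^{i.i.d.}_Q(\L)\right]$ decomposes into a sum of terms $f_Q(K-a_{i+1},a_i)$ --- but it closes the argument by a genuinely different key lemma. The paper works with three properties of $f_Q$: symmetry, monotonicity in each argument, and the superadditivity $f_Q(x,y)+f_Q(K-x,y)\geq f_Q(K,y)$ (proved by Hadamard/Fischer), and then merges the blocks sequentially from the destination side until only $f_Q(K,K)$ remains. You instead prove discrete concavity of $g=f_Q$ in each argument via the rank-one update $g(m+1,n)-g(m,n)=\EE\left[\log\left(1+\gamma\, v^\dagger M_m^{-1}v\right)\right]$ and the semidefinite monotonicity of $M_m^{-1}$, deduce the chord bound $g(x,q)\geq \frac{x}{K}g(K,q)$, and finish with the counting inequality $\sum_{i}(K-a_{i+1})a_i\geq K(a_0-a_D)=K^2$. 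Note that your concavity lemma is strictly stronger than what the paper uses: together with $g(0,\cdot)=0$ it implies both the paper's monotonicity (nonnegative increments) and its superadditivity (chord bound applied to $x$ and $K-x$), so your route subsumes the paper's properties at the cost of a slightly longer proof of the key lemma. What it buys is a one-shot quantitative per-block bound rather than an inductive merge, and a cleaner path to identifying the minimizers: your tightness discussion (each $a_i\in\{0,K\}$ with a single transition) shows every minimizing cut is a layer cut $\V^j$ with a $K\times K$ crossing matrix, which resolves the mild ambiguity in the definition of $d_Q^*$ via $\argmin$ that the paper passes over by simply exhibiting $\V^0$ as one minimizer. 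The only soft spot is that this last uniqueness claim needs strictness (strict concavity, or almost-sure strictness in Fischer's inequality), which you assert informally; it is not needed for the lemma itself, since exhibiting $\V^0$ as a minimizer of DOF $K$ already suffices.
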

\begin{proof}
See Appendix~\ref{app:d_Q}.
\end{proof}

\subsection{Tightening the approximation}\label{subsec:fast_tight}

The main idea in tightening the approximation is that for the fast-fading layered network, we can get rid of the term $d_0^*\log\left(1+\frac{M}{d_0^*}\right)$ in the gap given by Theorem~\ref{thm:new_main}. 

Recall from the proof of Theorem~\ref{thm:new_main} that this term appears because we need to bound the difference between the capacity of a MIMO channel with per-antenna power constraint and the rate achievable by using independent inputs at each antenna. However, for an i.i.d. Rayleigh fast-fading MIMO channel, it is the case that independent inputs at each node are optimal and so the largest rate achievable by using independent inputs at each antenna is equal to the capacity \cite{T99}.

Then, the proof for obtaining equation \eqref{eq:thm1_thm2} which is based on the proof of  Theorem~\ref{thm:new_main} can be repeated verbatim except for one change: in \eqref{eq:gapMIMO}, the term $d_0^*\log\left(1+\frac{\sum_{i\in\L_0^*}M_i}{d_0^*}\right)$ can be removed. This is valid since $\L_0^*=\V^0$ as shown by  Lemma~\ref{lem:dq_fastfading}, which induces an i.i.d. Rayleigh fast-fading $K\times K$ MIMO channel. 
This improves the lower bound obtained in the previous subsection from $\ol{C} - 2K\log D -K$ to $\ol{C} - K\log D -K.$ For clarity, we present the arguments in full formality below. 

We first define, for any $Q\geq 0$,
\begin{equation}\label{eq:f_Q}
f_Q(x,y)\triangleq\EE\left[\log\det\left(I + \frac{P}{(Q+1)\sigma^2}\bH_{x,y}\bH_{x,y}^\dagger\right)\right],\end{equation}
where $\bH_{x,y}$ is a $x\times y$ matrix containing i.i.d. $\C\N(0,1)$ entries. Note that using this notation, we have that $\EE\left[C^{i.i.d.}_Q(\V^0)\right]$ is equal to $f_Q(K,K)$.

Using this notation, the statement of Lemma~\ref{lem:dq_fastfading} is
\begin{equation}\label{eq:lem1_restate}\min_{\L:s\in\L,d\in\L^c} \EE\left[C^{i.i.d.}_Q(\L)\right] = \EE\left[C^{i.i.d.}_Q(\V^0)\right]=f_Q(K,K).\end{equation}

Before proceeding to the proof of the lower bound, we give the following lemma, which states that the cutset bound defined in \eqref{eq:cutset_fast_fading}, which involves a maximization over all possible input distributions, is equal to 
$\min_{\L:s\in\L,d\in\L^c} \EE\left[C^{i.i.d.}_0(\L)\right]$. 
\begin{lemma}\label{lem:cutset}
For the fast-fading layered network, 
$$\ol{C}= \min_{\L:s\in\L,d\in\L^c} \EE\left[C^{i.i.d.}_0(\L)\right],$$ and hence $\ol{C}$ also equals $\EE\left[C^{i.i.d.}_0(\V^0)\right] = f_0(K,K).$
\end{lemma}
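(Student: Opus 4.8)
The plan is to sandwich $\ol{C}$ between $\min_{\L}\EE\left[C^{i.i.d.}_0(\L)\right]$ from below and $\EE\left[C^{i.i.d.}_0(\V^0)\right]$ from above, and then to collapse the two bounds using Lemma~\ref{lem:dq_fastfading} evaluated at $Q=0$, which asserts precisely that these two quantities coincide (this is \eqref{eq:lem1_restate} with $Q=0$). The lower bound is the routine direction; the upper bound is where the work lies, and the crux is that the distinguished cut $\V^0$ induces a genuinely i.i.d.\ Rayleigh MIMO channel for which an i.i.d.\ Gaussian input is not merely convenient but actually optimal.

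For the lower bound, I would restrict the supremum in \eqref{eq:cutset_fast_fading} to the product input distribution in which $\mathbf{X}_j\sim\C\N(0,PI)$ independently at every node; this is a legitimate, $H$-independent, power-feasible choice, consistent with the absence of CSIT at the source and relays. Conditioned on $X_{\L^c}$ and $H$, the contribution of $X_{\L^c}$ at the receivers in $\L^c$ is known and can be subtracted, leaving the channel $Y_{\L^c}=\bH_{\L\rightarrow\L^c}X_{\L}+Z_{\L^c}$; for i.i.d.\ $\C\N(0,PI)$ inputs this yields exactly $C^{i.i.d.}_0(\L)$ per realization of $H$, and averaging over $H$ gives $\ol{C}(\L)=\EE\left[C^{i.i.d.}_0(\L)\right]$. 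Hence $\ol{C}\geq\min_{\L}\EE\left[C^{i.i.d.}_0(\L)\right]$.

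For the upper bound, I would use the single cut $\V^0$ (the cut isolating the source) inside the minimization: for every input distribution, $\min_{\L}\ol{C}(\L)\leq\ol{C}(\V^0)$, so $\ol{C}\leq\sup_{p(x_\N)}\ol{C}(\V^0)$. Since only the first relay layer $\V_1$ is directly fed by $s$, conditioning on $X_{(\V^0)^c}$ and $H$ reduces $\ol{C}(\V^0)$ to the ergodic mutual information of a $K\times K$ channel whose matrix has i.i.d.\ $\C\N(0,1)$ entries. The key step is to invoke \cite{T99}: for an i.i.d.\ Rayleigh fading MIMO channel with receiver-only CSI, the ergodic mutual information is maximized by an i.i.d.\ Gaussian input, and under the per-antenna power constraint the optimal covariance is $PI$. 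I would justify the per-antenna optimality by the standard unitary-symmetrization argument: the map $\Sigma\mapsto\EE_H\left[\log\det\left(I+\tfrac{1}{\sigma^2}\bH\Sigma\bH^\dagger\right)\right]$ is concave and invariant under $\Sigma\mapsto U\Sigma U^\dagger$ for Haar-random unitary $U$ (because $\bH U\stackrel{d}{=}\bH$), so Jensen's inequality pushes any feasible $\Sigma$ to $\tfrac{\tr\Sigma}{K}I$, which is feasible whenever $\tr\Sigma\leq KP$ and is best at $\tr\Sigma=KP$, i.e.\ $\Sigma=PI$. This gives $\sup_{p(x_\N)}\ol{C}(\V^0)=\EE\left[C^{i.i.d.}_0(\V^0)\right]=f_0(K,K)$, hence $\ol{C}\leq f_0(K,K)$.

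Finally, Lemma~\ref{lem:dq_fastfading} at $Q=0$ states $\min_{\L}\EE\left[C^{i.i.d.}_0(\L)\right]=\EE\left[C^{i.i.d.}_0(\V^0)\right]=f_0(K,K)$, so the lower and upper bounds coincide and both equal $\ol{C}$, which is exactly the claim (the ``and hence'' part being immediate). I expect the main obstacle to be the upper bound: one must verify that the reduction of $\ol{C}(\V^0)$ isolates exactly the i.i.d.\ Rayleigh $K\times K$ block so that \cite{T99} genuinely applies, and that the per-antenna — rather than sum — power constraint still admits $\Sigma=PI$ as the optimizer. For a generic cut $\L$ the induced matrix $\bH_{\L\rightarrow\L^c}$ is not i.i.d.\ Rayleigh (it carries a layered block-of-zeros structure), which is precisely why the argument must route the upper bound through the distinguished cut $\V^0$ identified by Lemma~\ref{lem:dq_fastfading}.
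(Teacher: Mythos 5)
Your proof is correct and follows essentially the same route as the paper's: both sandwich $\ol{C}$ between $\min_{\L}\EE\left[C^{i.i.d.}_0(\L)\right]$ (obtained by restricting the cutset supremum to i.i.d.\ Gaussian inputs) and $\sup_{p(x_{\N})}\ol{C}(\V^0)$ (evaluated via the optimality of independent inputs for the i.i.d.\ Rayleigh $K\times K$ MIMO channel, citing \cite{T99}), and then collapse the two bounds using Lemma~\ref{lem:dq_fastfading} at $Q=0$. Your explicit unitary-symmetrization/Jensen argument for why $\Sigma=PI$ is optimal under the per-antenna constraint is a harmless elaboration of the paper's bare citation of \cite{T99}.
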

\begin{proof}
See Appendix~\ref{app:lem}.
\end{proof}

Using the above lemma, we can now complete the proof of the tighter lower bound via the following chain of inequalities. Recall that $\mathbf{X}_j$ are chosen to be i.i.d. $\C\N\left(0,PI\right)$ and $\hat{Y}_k$'s are chosen according to \eqref{eq:hatY}. As in the previous subsection, we set $Q$ to be equal to $Q'=D-1.$
\begin{IEEEeqnarray}{rCl}
C & \stackrel{(a)}{\geq} & \min_{\L:s\in\L,d\in\L^c}\left( I(X_\L; \hat{Y}_{\L^c}|X_{\L^c},H)\right.\nonumber\\
&& \quad \left. -\> I(Y_\L; \hat{Y}_\L | X_\mathcal{N}, \hat{Y}_{\L^c},H)\right)\nonumber\\
& \geq & \min_{\L:s\in\L,d\in\L^c} I(X_\L; \hat{Y}_{\L^c}|X_{\L^c},H) \nonumber\\
&&\quad -\> \max_{\L:s\in\L,d\in\L^c} I(Y_\L; \hat{Y}_\L | X_\mathcal{N}, \hat{Y}_{\L^c},H)\nonumber\\
& \stackrel{(b)}{\geq} & \min_{\L:s\in\L,d\in\L^c} I(X_\L; \hat{Y}_{\L^c}|X_{\L^c},H) - \frac{K(D-1)}{Q'}\nonumber\\
& = & \min_{\L:s\in\L,d\in\L^c} \EE\left[C^{i.i.d.}_{Q'}(\L)\right] - \frac{K(D-1)}{Q'}\nonumber\\
& \stackrel{(c)}{=} & f_{Q'}(K,K)- \frac{K(D-1)}{Q'}\nonumber\\
& \stackrel{(d)}{\geq} & f_0(K,K) - K\log(Q'+1)- \frac{K(D-1)}{Q'}\nonumber\\
& \stackrel{(e)}{=} & \ol{C}- K\log(Q'+1)- \frac{K(D-1)}{Q'}\nonumber\\
& = & \ol{C}- K\log D- K,\label{eq:proof_thm2}
\end{IEEEeqnarray}
where
\begin{itemize}
\item[-] $(a)$ gives the rate achieved by noisy network coding,
\item[-] $(b)$ follows since, similar to \eqref{eq:NbyQ}, $$\max_{\L:s\in\L,d\in\L^c} I(Y_\L; \hat{Y}_\L | X_\mathcal{N}, \hat{Y}_{\L^c},H) \leq \frac{K(D-1)}{Q'},$$
\item[-] $(c)$ follows from \eqref{eq:lem1_restate},
\item[-] $(d)$ follows, similarly to \eqref{eq:KlogD}, because
\begin{IEEEeqnarray}{L}
f_{Q'}(K,K)\nonumber\\
\quad = \EE\left[\log\det\left(I + \frac{P}{(Q'+1)\sigma^2}\bH_{K,K}\bH_{K,K}^\dagger\right)\right]\nonumber\\
\quad \geq  \EE\left[\log\det\left(I + \frac{P}{\sigma^2}\bH_{K,K}\bH_{K,K}^\dagger\right)\right]\nonumber\\
\quad\quad -\> K\log (Q'+1)\nonumber\\
\quad =  f_0(K,K) - K\log(Q'+1), \label{eq:fQf0KlogQ}
\end{IEEEeqnarray}
\item[-] $(e)$ follows from Lemma~\ref{lem:cutset}. Note the difference between this step and the corresponding step \eqref{eq:gapMIMO} in the proof of Theorem~\ref{thm:new_main}. For general networks, the term $d_0^*\log\left(1+\frac{\sum_{i\in\N}M_i}{d_0^*}\right)$ is required, while for the special case of fast-fading layered networks, we are able to get rid of it.
\end{itemize}

This concludes the proof of Theorem~\ref{thm:fast_layered}.\hfill\IEEEQED

%

\subsection{Proof of Corollary~\ref{cor:multisrc_fast}}\label{subsec:proof_cor_multisrc}

In this subsection, we prove that the result of Theorem~\ref{thm:fast_layered} can be extended to the case with multiple sources. Assume that $K$ single-antenna sources each wish to transmit a message at rate $\frac{R}{K}$, so that the sum-rate is $R$. We have, via the cutset bound, the following upper bound on the achievable sum-rate $K$:
$$R < \sup_{p(x_{\N})}\min_{\substack{\L\; :\; s_1,s_2,\dots,s_K\in\L,\\ d\in\L^c}} I(X_\L;Y_{\L^c}|X_{\L^c},H).$$

The RHS of the above expression is equal to the cutset bound on the achievable rate in the case of a single source as given in \eqref{eq:cutset_fast_fading}. 
Hence, we have that if a sum-rate $R$ is achievable, then it must satisfy
$$R < \ol{C}.$$

This proves the upper bound on the sum-capacity. In the remainder of this subsection, we focus on proving the lower bound. As before, we fix the distribution $p(x_{\N})$ to be $\prod_{k\in\N}p(x_k)$, with each term being  $\C\N(0,P)$. The distribution $p(\hat{y}_k|y_k,x_k)$ at the relays is to be of the same form as that in \eqref{eq:hatY}. From the result for multiple sources stated in \cite[Theorem 1]{LKEC11}, we get that $R$ is achievable if for all $1\leq k\leq K$, we have
\begin{IEEEeqnarray}{rCl}
k\frac{R}{K} & < & \min_{\substack{\L : |\{s_i: s_i\in\L\}| = k, \\ d\in\L^c}} \left(I(X_\L;Y_{\L^c}|X_{\L^c},H) \right.\nonumber\\
&&\quad\quad\quad\quad\quad\quad \left. -\> I(Y_{\L};\hat{Y}_{\L}|X_{\N},Y_{\L^c},H)\right).\IEEEeqnarraynumspace\label{eq:all_constraints}
\end{IEEEeqnarray}
For a given $k$, the above constraint is obtained by considering cuts $\L$ which contain $k$ source nodes and therefore it upper bounds the sum rate $kR/K$ achievable for these $k$ sources.

Note that we get a constraint on $R$ for each value of $k$, where $k\in\{1,2,\dots,K\}$. Also, note that if we consider $k=K$, we get a constraint on $R$ that is the same as \eqref{eq:nnc_fastfading}. So, if this were the only constraint on $R$, then the proof of Theorem~\ref{thm:fast_layered} in Section~\ref{subsec:fast_tight}, which shows that the right-hand side of \eqref{eq:nnc_fastfading} is larger than $\ol{C}-K\log D - K$, would conclude the proof of Corollary~\ref{cor:multisrc_fast}. Towards this goal, we prove in Appendix~\ref{app:cor} that any $k<K$ imposes a constraint on $R$ that is only looser than the constraint
\begin{IEEEeqnarray*}{rCl}
R & < & \ol{C}-K\log D - K\\
& = & f_0(K,K) - K\log D  -K.
\end{IEEEeqnarray*}
This concludes the proof of Corollary~\ref{cor:multisrc_fast}.\hfill\IEEEQED

\section{Static Layered Networks}\label{sec:stat_layered}
In this section, we prove Theorem~\ref{thm:stat_layered}. 
We first show that for any $Q\geq 0$, $\min_{\L:s\in\L,d\in\L^c}C^{i.i.d.}_Q(\L)$ can be approximated upto an additive constant by restricting the minimization to cuts in a particular class. Then, Theorem~\ref{thm:stat_layered} is proved by making use of Remark~\ref{remark:classA}.

For convenience, let $\bH_{\V_i\rightarrow\V_{i+1}}$ denote the matrix in $\mathbb{C}^{K\times K}$ containing channel gains from nodes in layer $i$ to nodes in layer $i+1$, and call the $K^2$ entries in $\bH_{\V_i\rightarrow\V_{i+1}}$ as the links in layer $i$. With this convention in mind, let $\A$ denote the set of cuts $\L$ for which the links crossing from $\L$ to $\L^c$ come from at most $K-1$ layers, e.g. see Figure~\ref{fig:cut}.

\begin{figure}[!ht]
\centering
\includegraphics[scale=1]{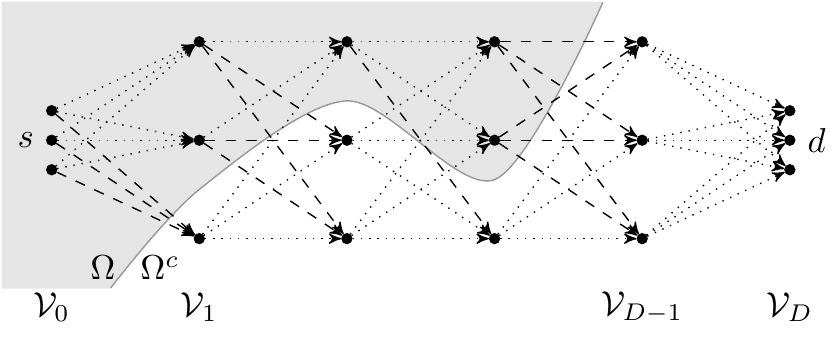}
\caption{The cut $\L$ depicted here $\notin\A$ since the crossing links come from 4 layers, and $4>K-1=2$.}\label{fig:cut}\squeezeup
\end{figure}

\begin{lemma}\label{lem:ind} For the static layered network in Section~\ref{subsec:stat_layered}, we have, for any $Q\geq 0$,
$$ \min_{\L:s\in\L,d\in\L^c} C^{i.i.d.}_Q(\L) \leq \min_{\substack{\L\in\A:\\ s\in\L,d\in\L^c}} C^{i.i.d.}_Q(\L),$$
and
$$\min_{\L:s\in\L,d\in\L^c} C^{i.i.d.}_Q(\L) \geq \min_{\substack{\L\in\A:\\ s\in\L,d\in\L^c}} C^{i.i.d.}_Q(\L)-K\log K.$$
\end{lemma}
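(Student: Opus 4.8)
The first inequality is immediate: $\A$ is a subset of the cuts separating $s$ from $d$, so minimizing $C^{i.i.d.}_Q$ over the smaller collection can only increase the value. The content lies in the second inequality, and my plan is to establish the stronger claim that \emph{every} cut $\L$ admits some $\L'\in\A$ with $C^{i.i.d.}_Q(\L')\le C^{i.i.d.}_Q(\L)+K\log K$; applying this to the minimizing cut then yields the stated bound. Throughout I would write $\rho=P/((Q+1)\sigma^2)$, so that $C^{i.i.d.}_Q(\L)=\log\det(I+\rho\,\bH_{\L\rightarrow\L^c}\bH_{\L\rightarrow\L^c}^\dagger)$.

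The structural observation driving the proof is that, because the network is layered, every crossing link of a cut $\L$ joins a node of $\L$ in some layer $i$ to a node of $\L^c$ in the adjacent layer $i+1$. Grouping the transmit antennas of $\L$ and the receive antennas of $\L^c$ by layer therefore renders $\bH_{\L\rightarrow\L^c}$ block diagonal, with one block $\bH_i\triangleq\bH_{A_i\rightarrow B_{i+1}}$ per layer boundary $i$, where $A_i=\V_i\cap\L$ and $B_{i+1}=\V_{i+1}\cap\L^c$. Consequently the cut value decomposes additively, $C^{i.i.d.}_Q(\L)=\sum_{i=0}^{D-1}\log\det(I+\rho\,\bH_i\bH_i^\dagger)$, and a boundary contributes a nonzero term exactly when it is \emph{active}, i.e. when both $A_i$ and $B_{i+1}$ are nonempty. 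Membership in $\A$ is precisely the requirement that at most $K-1$ boundaries be active.

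I would then extract two estimates from the unit-magnitude assumption. First, at an active boundary the entries of $\bH_i$ all have modulus one, so $\tr(\bH_i\bH_i^\dagger)=a_ib_{i+1}$ (writing $a_i=|A_i|$, $b_{i+1}=|B_{i+1}|$) and its top eigenvalue is at least $\tr(\bH_i\bH_i^\dagger)/\mathrm{rank}(\bH_i)\ge\max(a_i,b_{i+1})\ge1$; hence each active boundary contributes at least $\log(1+\rho)$. Second, each prefix cut $\V^i$ consists of a \emph{single} full $K\times K$ boundary (all $K$ antennas of layer $i$ transmitting to all $K$ of layer $i+1$), so by concavity of $\log\det$ its value is at most $K\log(1+\rho K)$; moreover every such prefix cut lies in $\A$, since it has one active boundary and $K\ge2$.

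Combining these yields the conclusion. If the minimizing cut already lies in $\A$ there is nothing to prove. Otherwise it has at least $K$ active boundaries, so by the first estimate its value is at least $K\log(1+\rho)$, whereas any prefix cut — an element of $\A$ — has value at most $K\log(1+\rho K)\le K\log(1+\rho)+K\log K$, the last step using $1+\rho K\le K(1+\rho)$. The difference is thus at most $K\log K$, as required. The genuinely delicate points are establishing the block-diagonal decomposition cleanly and the trace/rank lower bound on the largest eigenvalue; I expect the latter to be the crux, and I would emphasize that both the exact trace count $a_ib_{i+1}$ and the cancellation of the $\rho$-dependence hinge on the entries having unit modulus, which is exactly why this refinement is available for the static network but not for general Gaussian networks.
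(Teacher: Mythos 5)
Your proposal is correct and takes essentially the same route as the paper's proof: the same block-diagonal decomposition across layer boundaries, the same lower bound of $\log\bigl(1+\tfrac{P}{(Q+1)\sigma^2}\bigr)$ on each of the at least $K$ active boundaries of a cut outside $\A$, and the same comparison against a single-boundary cut in $\A$ whose value is at most $K\log\bigl(1+\tfrac{PK}{(Q+1)\sigma^2}\bigr)\leq K\log\bigl(1+\tfrac{P}{(Q+1)\sigma^2}\bigr)+K\log K$. The only differences are cosmetic: the paper lower-bounds each active block by restricting to a single unit-magnitude link and upper-bounds the cut $\V_0$ via Hadamard's inequality, whereas you use a trace-over-rank eigenvalue bound and AM--GM on the eigenvalues, respectively.
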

\begin{proof}
The upper bound is immediate. The lower bound can be proved by noting that the chain of inequalities given on top of the next page, holds for any cut $\L\notin\A$, where $(a)$ follows since for any cut $\notin\A$, at least $K$ terms in the summation are non-zero and each of these terms can be lower-bounded by 
the AWGN capacity of a point-to-point channel between a single transmit and single receive antenna with unit magnitude channel coefficient; and $(b)$ follows by Lemma \ref{lem:CV0}  which is stated and proved below. This concludes the proof of the lemma.
\begin{figure*}[!th]
\normalsize
\begin{IEEEeqnarray*}{rCl}
C^{i.i.d.}_Q(\L) & = & \sum_{i=0}^{D-1} \log\det\left( I + \frac{P}{(Q+1)\sigma^2}\bH_{(\V_i\cap\L)\rightarrow (\V_{i+1}\cap\L^c)}\bH_{(\V_i\cap\L)\rightarrow (\V_{i+1}\cap\L^c)}^{\dagger} \right)\\
& \stackrel{(a)}{\geq} & K\log\left(1+\frac{P}{(Q+1)\sigma^2}\right)\\
& \stackrel{(b)}{\geq} & C^{i.i.d.}_Q(\V_0) - K\log K\\
& \geq & \min_{\substack{\L\in\A:\\ s\in\L,d\in\L^c}} C^{i.i.d.}_Q(\L) - K\log K
\end{IEEEeqnarray*}
\hrulefill
\end{figure*}
\end{proof}

\begin{lemma}\label{lem:CV0} For the static layered network in Section~\ref{subsec:stat_layered}, we have, for any $Q\geq 0,$
$$C^{i.i.d.}_Q(\V_0) \leq K\log\left(1+\frac{P}{(Q+1)\sigma^2}\right) + K\log K.$$
\end{lemma}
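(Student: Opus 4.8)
The plan is to exploit the layered topology to reduce $C^{i.i.d.}_Q(\V_0)$ to a determinant involving only the single $K\times K$ channel matrix $\bH_{\V_0\rightarrow\V_1}$, and then to bound this determinant using its eigenvalue spectrum together with the unit-magnitude constraint on the entries. First I would observe that since the source in layer $\V_0$ transmits only to the nodes in layer $\V_1$, every receive antenna outside $\V_1$ sees a zero row in the induced matrix $\bH_{\V_0\rightarrow\V_0^c}$. Consequently the nonzero eigenvalues of the Gram matrix $\bH_{\V_0\rightarrow\V_0^c}\bH_{\V_0\rightarrow\V_0^c}^\dagger$ coincide with those of $\bH_{\V_0\rightarrow\V_1}\bH_{\V_0\rightarrow\V_1}^\dagger$, and the remaining zero eigenvalues contribute only factors of $1$ to the determinant, so that
$$C^{i.i.d.}_Q(\V_0)=\log\det\left(I+\frac{P}{(Q+1)\sigma^2}\bH_{\V_0\rightarrow\V_1}\bH_{\V_0\rightarrow\V_1}^\dagger\right),$$
where $\bH_{\V_0\rightarrow\V_1}\in\mathbb{C}^{K\times K}$ has all entries of unit magnitude.

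Next, writing $\lambda_1,\dots,\lambda_K\geq 0$ for the eigenvalues of $\bH_{\V_0\rightarrow\V_1}\bH_{\V_0\rightarrow\V_1}^\dagger$ and setting $\gamma=\frac{P}{(Q+1)\sigma^2}$, I would expand the determinant as $C^{i.i.d.}_Q(\V_0)=\sum_{i=1}^K\log(1+\gamma\lambda_i)$. The key quantitative input is the trace constraint: because every entry of $\bH_{\V_0\rightarrow\V_1}$ has unit magnitude, $\sum_{i=1}^K\lambda_i=\tr\left(\bH_{\V_0\rightarrow\V_1}\bH_{\V_0\rightarrow\V_1}^\dagger\right)=K^2$. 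Since $\lambda\mapsto\log(1+\gamma\lambda)$ is concave, Jensen's inequality gives $\frac{1}{K}\sum_{i=1}^K\log(1+\gamma\lambda_i)\leq\log\left(1+\gamma\cdot\frac{1}{K}\sum_{i}\lambda_i\right)=\log(1+\gamma K)$, and hence $C^{i.i.d.}_Q(\V_0)\leq K\log(1+\gamma K)$.

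Finally I would convert this to the stated form by the elementary estimate $1+\gamma K\leq K+\gamma K=K(1+\gamma)$, valid because $K\geq 1$, which yields $K\log(1+\gamma K)\leq K\log K+K\log(1+\gamma)$ and completes the bound. I do not expect any serious obstacle here: the substance of the argument is simply that the trace of the Gram matrix is pinned to $K^2$ by the unit-modulus entries, after which concavity does all the work. The only point that warrants a little care is the reduction in the first step, namely verifying that the zero blocks of the layered channel matrix leave the determinant unchanged so that the cut $\V_0$ genuinely collapses to a single $K\times K$ MIMO channel.
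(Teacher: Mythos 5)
Your proof is correct, and it reaches the bound through a different key inequality than the paper. The paper applies Hadamard's inequality directly to the Gram matrix, $\det\left(I+\gamma\,\bH_{\V_0\rightarrow\V_1}\bH_{\V_0\rightarrow\V_1}^\dagger\right)\leq\prod_{i=1}^K\left(1+\gamma\,\mathbf{h}_i\mathbf{h}_i^\dagger\right)$ with $\gamma=\frac{P}{(Q+1)\sigma^2}$, and then uses the unit-modulus entries row by row: each row satisfies $\mathbf{h}_i\mathbf{h}_i^\dagger=K$. You instead pass to the eigenvalue spectrum and use only the aggregate trace constraint $\sum_i\lambda_i=\tr\left(\bH_{\V_0\rightarrow\V_1}\bH_{\V_0\rightarrow\V_1}^\dagger\right)=K^2$ together with concavity of $\lambda\mapsto\log(1+\gamma\lambda)$ (equivalently, AM--GM on the factors $1+\gamma\lambda_i$). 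Both routes land on the identical intermediate bound $K\log(1+\gamma K)$ and finish with the same elementary step $1+\gamma K\leq K(1+\gamma)$. As for what each buys: Hadamard's bound is in general at least as tight as the trace/Jensen bound, since AM--GM gives $\prod_i A_{ii}\leq\left(\tr A/K\right)^K$ for positive semidefinite $A$, but here the two coincide exactly because every diagonal entry of the Gram matrix equals $K$; conversely, your argument uses strictly less structure---only the Frobenius norm of the channel matrix rather than the per-row norms---so it would survive, for instance, replacing the unit-modulus assumption by any constraint that fixes the total energy $\sum_{i,j}|h_{ij}|^2=K^2$. Finally, your explicit verification that the full cut matrix $\bH_{\V_0\rightarrow(\V_0)^c}$ collapses to the $K\times K$ block $\bH_{\V_0\rightarrow\V_1}$ (the zero rows for receive antennas beyond layer $\V_1$ contribute nothing to the determinant) is a point the paper leaves implicit, and it is correct.
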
\squeezeup
\begin{proof}
\begin{IEEEeqnarray}{RLL}
C^{i.i.d.}_Q(\V_0)& = &\log\det \left(I+\frac{P}{(Q+1)\sigma^2}\bH_{\V_0\rightarrow\V_{1}}\bH_{\V_0\rightarrow\V_{1}}^\dagger\right)\nonumber\\
& \stackrel{(a)}{\leq} & \sum_{i=1}^K\log \left(1+\frac{P}{(Q+1)\sigma^2}\mathbf{h}_i\mathbf{h}_i^\dagger\right)\nonumber\\
& \stackrel{(b)}{=} & \sum_{i=1}^K\log \left(1+\frac{P}{(Q+1)\sigma^2}K\right)\nonumber\\
& \leq & K\log\left(1+\frac{P}{(Q+1)\sigma^2}\right) + K\log K, \nonumber
\end{IEEEeqnarray}
where $\mathbf{h}_i$ denotes the $i$th row of $H_{\V_0\rightarrow\V_{1}}$ and $(a)$ follows by 
using Hadamard's inequality and $(b)$ follows from the fact that the channel gains have unit magnitude.
\end{proof}

We now use the observation made in Remark~\ref{remark:classA} to prove Theorem~\ref{thm:stat_layered}. As in the previous section, first note that $M=N=K(D-1)$. Then, we note that for any cut $\L$ in $\A$, the matrix $\bH_{\L\rightarrow\L^c}$ can have at most $K(K-1)$ columns. This is because the links crossing from $\L$ to $\L^c$ come from at most $K-1$ layers, hence there can be at most $K(K-1)$ nodes in $\L$ from which the crossing links originate. Hence, a trivial upper bound on $\tilde{d}_Q^*$ (defined in \eqref{eq:dofA}) for any $Q$ is
\begin{equation}\label{eq:bound_dA}\tilde{d}_Q^* \leq K(K-1) \leq K^2.\end{equation} 

Now, we set $Q$ to be $Q'=D-1$ and use the result in \eqref{eq:gapA} to prove Theorem~\ref{thm:stat_layered} as follows:
\begin{IEEEeqnarray*}{rCl}
C & \geq & \ol{C} - \tilde{d}_0^*\log\left(1+\frac{M}{\tilde{d}_0^*}\right) - \frac{N}{Q'} - \tilde{d}_{Q'}^*\log(Q'+1) - \kappa \\
 &\stackrel{(a)}{\geq} & \ol{C}- \tilde{d}_0^*\log\left(1+\frac{M}{\tilde{d}_0^*}\right) - \frac{N}{Q'} - \tilde{d}_{Q'}^*\log(Q'+1)\\
 &&\quad -\> K\log K\\
 &\stackrel{(b)}{\geq} &\ol{C}- K^2\log\left(1+\frac{K(D-1)}{K^2}\right) - \frac{K(D-1)}{Q'} \\
 && \quad -\> K^2\log(Q'+1) -K\log K\\
 &\stackrel{(c)}{=} &\ol{C}- K^2\log\left(1+\frac{D-1}{K}\right) - K - K^2\log D \\
 &&\quad -\> K\log K\\
 &\geq &\ol{C}- 2K^2\log D - K\log K - K,
\end{IEEEeqnarray*}
where $(a)$ follows by Lemma~\ref{lem:ind}, $(b)$ follows from \eqref{eq:bound_dA} and the fact that $x\log(1+M/x)$ is an increasing function of $x$, and $(c)$ follows since $Q'=D-1$. This concludes the proof of Theorem~\ref{thm:stat_layered}.\hfill\IEEEQED

\section{Concluding Remarks}
In this paper, we have developed improved capacity approximations for Gaussian relay networks. While existing approximations bound the capacity gap only in terms of the total number of nodes in the network, we have developed a refined approximation for the capacity of general Gaussian relay networks where the gap depends not only on the total number of nodes but other structural properties of the network (the degrees of freedom of the mincut). We have shown that this refined result allows to better approximate the capacity of many Gaussian networks, some classes of layered networks in particular.

The  improvement comes from carefully exploiting a trade-off inherent to compress-and-forward based strategies. When relays quantize/compress signals very finely, little quantization noise is introduced to the communication. When relays quantize/compress signals coarsely, there is a smaller rate penalty associated with communicating these quantization indices to the destination. We have shown that this trade-off can be very much in favor of coarse quantization, leading to the counter-intuitive principle of quantizing signals more and more coarsely with increasing number of relaying stages.

\appendices
\section{Proof of Lemma~\ref{lem:dq_fastfading}}\label{app:d_Q}
\begin{proof}
By the definition of $C^{i.i.d.}_Q(\L)$,
\begin{IEEEeqnarray*}{L}
\EE\left[ C^{i.i.d.}_Q(\L)\right] \\
\quad =\> \EE\left[\log\det\left(I+\frac{P}{(Q+1)\sigma^2}\bH_{\L\rightarrow\L^c}\bH_{\L\rightarrow\L^c}^\dagger\right)\right].
\end{IEEEeqnarray*}

We first note that for any cut $\L$ in the set $\{\V^0,\V^1,\dots,\V^{D-1}\}$, the statistics of $\bH_{\L\rightarrow\L^c}$ are identical. Hence, the value of $\EE\left[ C^{i.i.d.}_Q(\L)\right]$ is the same for all these cuts and we use $\V^0$ as a representative.

We now prove the statement: For any $Q\geq 0$,
\begin{equation}\label{eq:lem_stronger} \min_{\L:s\in\L,d\in\L^c}\EE\left[C^{i.i.d.}_Q(\L)\right] = \EE\left[C^{i.i.d.}_Q(\V^0)\right].\end{equation}
The proof of the ``$\leq$'' direction of the inequality, i.e. $$\min_{\L:s\in\L,d\in\L^c}\EE\left[C^{i.i.d.}_Q(\L)\right] \leq \EE\left[C^{i.i.d.}_Q(\V^0)\right]$$ is immediate. We focus on proving the inequality in the other direction in the remainder of this proof.

Consider a cut $\L$ that contains $M_1$ nodes from $\V_1$, $M_2$ from $\V_2$ and so on until $M_{D-1}$ from $\V_{D-1}$ (see Figure~\ref{fig:cut}). Then $\EE\left[C_Q^{i.i.d.}(\L)\right]$ is given by $$\EE\left[\log\det\left(I+\frac{P}{(Q+1)\sigma^2}\bH_{\L\rightarrow\L^c}\bH_{\L\rightarrow\L^c}^{\dagger}\right)\right],$$
where $\bH_{\L\rightarrow\L^c}$ is a block diagonal matrix containing blocks of size $M_1^c$-by-$K$, $M_2^c$-by-$M_1$, $M_3^c$-by-$M_2$, $\dots$, $M_{D-1}^c$-by-$M_{D-2}$ and finally $K$-by-$M_{D-1}$. In the preceding sentence, we have abused notation slightly by using $M_i^c$ to mean $|\V_i|-M_i=K-M_i.$

Since $\bH_{\L\rightarrow\L^c}$ has a block diagonal structure, $\EE\left[C^{i.i.d.}_Q(\L)\right]$ breaks down into a sum of terms, each being a function of the number of nodes in $\L$ that belong to two adjacent layers. Thus,
\begin{IEEEeqnarray}{L} 
\EE\left[C^{i.i.d.}_Q(\L)\right]\nonumber\\
  =  \EE\left[\log\det\left(I+\frac{P}{(Q+1)\sigma^2}\bH_{\L\rightarrow\L^c}\bH_{\L\rightarrow\L^c}^{\dagger}\right)\right]\nonumber\\
 =  f_Q(M_1^c,K)+f_Q(M_2^c,M_1)\nonumber\\
 \quad +\>\dots +f_Q(M_{D-1}^c,M_{D-2})+f_Q(K,M_{D-1}),\label{eq:sum_layers}
\end{IEEEeqnarray}
where $f_Q(x,y)$ is defined as in \eqref{eq:f_Q}:
\begin{equation*}
f_Q(x,y)\triangleq\EE\left[\log\det\left(I + \frac{P}{(Q+1)\sigma^2}\bH_{x,y}\bH_{x,y}^\dagger\right)\right],\end{equation*} and
$\bH_{x,y}$ is a $x\times y$ matrix containing i.i.d. $\C\N(0,1)$ entries. Note that using this notation, $\EE\left[C^{i.i.d.}_Q(\V^0)\right]$ is equal to $f_Q(K,K)$. So, our aim is to show that for any cut $\L$, the quantity appearing in \eqref{eq:sum_layers} is no less than $f_Q(K,K)$.

To accomplish this, we note the following properties of the function $f_Q(x,y)$:
\begin{itemize}
\item[a)] $f_Q(x,y) = f_Q(y,x)$.
\item[b)] $f_Q(z,y)\geq f_Q(x,y)$ if $z\geq x$.
\item[c)] $f_Q(x,y)+f_Q(K-x,y)\geq f_Q(K,y)$.
\end{itemize}
The first two properties are straightforward and the third property follows via a simple application of Hadamard's inequality.

Proving that the quantity in \eqref{eq:sum_layers} is no less than $f_Q(K,K)$ is just a matter of applying these properties multiple times. For concreteness, we show this for the case $D=4$ below, which can be generalized in a straightforward fashion to higher values of $D$.
\begin{IEEEeqnarray}{L}
f_Q(M_1^c,K)+f_Q(M_2^c,M_1)+f_Q(M_3^c,M_2)+f_Q(K,M_3)\nonumber\\
\geq  f_Q(M_1^c,K)+f_Q(M_2^c,M_1)\nonumber\\
\quad +\> f_Q(M_3^c,M_2)+f_Q(M_2,M_3)\nonumber\\
\geq  f_Q(M_1^c,K)+f_Q(M_2^c,M_1)+f_Q(K,M_2)\nonumber\\
\geq  f_Q(M_1^c,K)+f_Q(M_2^c,M_1)+f_Q(M_1,M_2)\nonumber\\
\geq  f_Q(M_1^c,K)+f_Q(K,M_1)\nonumber\\
\geq  f_Q(K,K)\label{eq:prop_f}\\
 = \EE\left[C^{i.i.d.}_Q(\V^0)\right],\nonumber
\end{IEEEeqnarray} 
where the first inequality follows by applying property (b) to the last term in the first line, the second inequality follows by applying (c) to the last two terms in the earlier line etc. Since this is true for any cut $\L$, we have shown that \begin{equation}\label{eq:lem1}\min_{\L:s\in\L,d\in\L^c}\EE\left[C^{i.i.d.}_Q(\L)\right]\geq \EE\left[C^{i.i.d.}_Q(\V^0)\right].\end{equation}

Thus, we have shown that \eqref{eq:lem_stronger} is true, i.e. 
\begin{equation}\label{eq:lem_stronger_restate}
\min_{\L:s\in\L,d\in\L^c}\EE\left[C^{i.i.d.}_Q(\L)\right] = \EE\left[C^{i.i.d.}_Q(\V^0)\right] = f_Q(K,K),
\end{equation}
which implies that $\V^0 \in \argmin_{\L:s\in\L,d\in\L^c} \EE\left[C^{i.i.d.}_Q(\L)\right].$ This further implies that $$d_Q^*=K,$$ since the DOF of the fast-fading MIMO channel corresponding to $\V^0$ is $K$.
\end{proof}

\section{Proof of Lemma~\ref{lem:cutset}} \label{app:lem}
Starting from \eqref{eq:cutset_fast_fading}, we have
\begin{IEEEeqnarray*}{rCl}
\ol{C} & = & \sup_{p(x_{\N})}\left(\min_{\L:s\in\L,d\in\L^c} \ol{C}(\L)\right)\\
& = & \sup_{p(x_{\N})}\left(\min_{\L:s\in\L,d\in\L^c} I(X_{\L};Y_{\L^c}|X_{\L^c},H)\right)\\
& \leq & \sup_{p(x_{\N})}\left( I(X_{\V^0};Y_{\V^0}|X_{(\V^0)^c},H)\right)\\
& \stackrel{(a)}{=} & \EE\left[\log\det\left(I+\frac{P}{\sigma^2}\bH_{\V^0\rightarrow(\V^0)^c}\bH_{\V^0\rightarrow(\V^0)^c}^\dagger\right)\right]\\
& = & \EE\left[C^{i.i.d.}_0(\V^0)\right]\\
& \stackrel{(b)}{=} & \min_{\L:s\in\L,d\in\L^c} \EE\left[C^{i.i.d.}_0(\L)\right]\\
& \leq & \sup_{p(x_{\N})}\left(\min_{\L:s\in\L,d\in\L^c} I(X_{\L};Y_{\L^c}|X_{\L^c},H)\right)\\
&= & \ol{C},
\end{IEEEeqnarray*}
where  $(a)$ follows by the fact that for a i.i.d. Rayleigh fast-fading MIMO channel, the optimal input distribution is independent across antennas \cite{T99}, and $(b)$ follows from \eqref{eq:lem_stronger} which shows that the cut that minimizes $\EE\left[C^{i.i.d.}_0(\L)\right]$ is $\V^0$.\hfill\IEEEQED

\section{}\label{app:cor}
In this appendix, we elaborate on the argument required to prove the lower bound in Corollary~\ref{cor:multisrc_fast}. 

Consider a cut $\L$ such that $|\{s_i: s_i\in\L\}| = k.$ Let $\L$ contain $M_i$ nodes from layer $\V_i$, for $1\leq i\leq D-1$. As before, we choose the quantization noise variance $Q$ to be $Q'=D-1$. This gives us a constraint on the achievable sum-rate $R$ as follows:
\begin{IEEEeqnarray*}{rCl}
R & < & \frac{K}{k}\left(I(X_\L;\hat{Y}_{\L^c}|X_{\L^c},H) - I(Y_{\L};\hat{Y}_{\L}|X_{\N},\hat{Y}_{\L^c},H)\right)\\
& = & \frac{K}{k}\left(\EE\left[C^{i.i.d.}_{Q'}(\L)\right] - I(Y_{\L};\hat{Y}_{\L}|X_{\N},\hat{Y}_{\L^c},H)\right)\\
& = & \frac{K}{k}\Big( f_{Q'}(M_1^c,k) + f_{Q'}(M_2^c,M_1) + \dots + f_{Q'}(K,M_{D-1})\\
&&\quad  -\> I(Y_{\L};\hat{Y}_{\L}|X_{\N},\hat{Y}_{\L^c},H) \Big),
\end{IEEEeqnarray*}
where we use the notation $f_Q(x,y)$ defined in \eqref{eq:f_Q}. Since we have $$ I(Y_{\L};\hat{Y}_{\L}|X_{\N},\hat{Y}_{\L^c},H) \leq \frac{\sum_{i=1}^{D-1}M_i}{Q'}= \frac{\sum_{i=1}^{D-1}M_i}{D-1},$$ which can be proved using steps similar to those used to arrive at \eqref{eq:NbyQ},  we can impose a tighter constraint on the sum-rate $R$ due to the cut $\L$, which is as follows.
\begin{IEEEeqnarray}{rCl}
R & < & \frac{K}{k}\bigg(f_{Q'}(M_1^c,k) + f_{Q'}(M_2^c,M_1)\nonumber\\
&& \quad  +\> \dots + f_{Q'}(K,M_{D-1})- \frac{\sum_{i=1}^{D-1}M_i}{D-1}\bigg) .\label{eq:constraint_k}
\end{IEEEeqnarray}

In the following, we show for any $k<K$, the above is weaker than
\begin{IEEEeqnarray}{rCl}
R & < & f_{0}(K,K)-K\log D - K, \label{eq:constraint_k_K}
\end{IEEEeqnarray} 
i.e. the right-hand side of \eqref{eq:constraint_k} for any $k<K$ is larger than $f_{0}(K,K)-K\log D - K$. 

Note that if $f_{0}(K,K)-K\log D -K\leq 0$, the achievable rate claimed by \eqref{eq:constraint_k_K} is zero so there is nothing to prove, so we assume that $f_{0}(K,K)-K\log D - K>0$.



\begin{figure*}[!th]
\normalsize
\begin{IEEEeqnarray*}{C}
\frac{1}{k~{K \choose k}} \sum_{1\leq i_1 < \dots < i_k\leq K}\log\det \left(\pi e \left(I_k+\lambda\,\bH_{l,(i_1,\dots,i_k)}^\dagger\bH_{l,(i_1,\dots , i_k)}\right)\right) \, \geq\, \frac{1}{K}\log\det \left(\pi e\left(I_K +\lambda\,\bH_{l,K}^\dagger\bH_{l,K}\right)\right)
\end{IEEEeqnarray*}
\hrulefill
\end{figure*}

\begin{itemize}
\item If the cut $\L$ has $M_{1} = M_2 = \dots  = M_{D-1} = 0$, then the expression in the constraint \eqref{eq:constraint_k}  becomes 
\begin{IEEEeqnarray*}{LCl}
 \frac{K}{k}\bigg(f_{Q'}(M_1^c,k) + f_{Q'}(M_2^c,M_1) \\
 \quad\quad\quad +\> \dots + f_{Q'}(K,M_{D-1})- \frac{\sum_{i=1}^{D-1}M_i}{D-1}\bigg)\\
\quad =  \frac{K}{k}f_{Q'}(K,k)\\
\quad \stackrel{(a)}{\geq}  f_{Q'}(K,K)\\
\quad \geq  f_{Q'}(K,K) - K\\
\quad \stackrel{(b)}{\geq} f_0(K,K) - K\log D - K,
\end{IEEEeqnarray*} 
where $(a)$ follows from Claim~\ref{claim:gen_hadamard}, provided at the end of this Appendix, and $(b)$ follows by the same argument as in \eqref{eq:fQf0KlogQ}.
\item If the cut is $\L$ such that $M_i = K$ for some $i\in\{1,2,\dots ,K\}$, then 
\begin{IEEEeqnarray*}{lCl}
\frac{K}{k}\bigg(f_{Q'}(M_1^c,k) + f_{Q'}(M_2^c,M_1) \\
\quad\quad\quad +\> \dots + f_{Q'}(K,M_{D-1})- \frac{\sum_{i=1}^{D-1}M_i}{D-1}\bigg)\\
\quad \stackrel{(a)}{\geq} \frac{K}{k}\left(f_{Q'}(K,K)- \frac{\sum_{i=1}^{D-1}M_i}{D-1}\right)\\
\quad \geq \frac{K}{k}\left(f_{Q'}(K,K)- K\right)\\
\quad \stackrel{(b)}{\geq}  f_{Q'}(K,K)- K\\
\quad \geq f_0(K,K) - K\log D -K,
\end{IEEEeqnarray*} 
where $(a)$ follows by using the properties of the function $f_Q$ as in \eqref{eq:prop_f}, and $(b)$ follows since $\frac{K}{k}\geq 1.$
\item Let $i^* = \argmax_{1\leq i\leq D-1} M_i$ so that $M_{i^*} = \max_{1\leq i \leq D-1} M_i$. From the previous two cases, we can focus our attention to $0<M_{i^*} < K.$ Also, note that $M_1<K$ implies that $M_1^c > 0.$ The RHS of the constraint due to $\L$ is
\begin{IEEEeqnarray*}{LCl}
 \frac{K}{k}\bigg(f_{Q'}(M_1^c,k) + f_{Q'}(M_2^c,M_1)\\
 \quad\quad\quad +\> \dots + f_{Q'}(K,M_{D-1})- \frac{\sum_{i=1}^{D-1}M_i}{D-1}\bigg)\\
\quad =  \frac{K}{k}f_{Q'}(M_1^c,k) + \frac{K}{k}\bigg(f_{Q'}(M_2^c,M_1) \\
\quad\quad\quad\quad +\> \dots + f_{Q'}(K,M_{D-1})- \frac{\sum_{i=1}^{D-1}M_i}{D-1}\bigg)\\
\quad \stackrel{(a)}{\geq}  f_{Q'}(M_1^c,K) + \frac{K}{k}\bigg(f_{Q'}(M_2^c,M_1) \\ \quad\quad\quad\quad +\> \dots + f_{Q'}(K,M_{D-1})- \frac{\sum_{i=1}^{D-1}M_i}{D-1}\bigg)\\
\quad \stackrel{(b)}{\geq}   f_{Q'}(M_1^c,K) + \bigg(f_{Q'}(M_2^c,M_1) \\
\quad\quad\quad\quad +\> \dots + f_{Q'}(K,M_{D-1})- \frac{\sum_{i=1}^{D-1}M_i}{D-1}\bigg)\\
\quad \stackrel{(c)}{\geq} f_{Q'}(K,K) - K\\
\quad \geq f_0(K,K) - K\log D - K,
\end{IEEEeqnarray*}
where 
\begin{itemize}
\item[-] $(a)$ follows by Claim~\ref{claim:gen_hadamard}, 
\item[-] $(b)$ follows because $\frac{K}{k}\geq 1$ and because
$$f_{Q'}(M_2^c,M_1) + \dots + f_{Q'}(K,M_{D-1})- \frac{\sum_{i=1}^{D-1}M_i}{D-1},$$is non-negative, which is proved as follows:
\begin{IEEEeqnarray*}{l}
f_{Q'}(M_2^c,M_1) + \dots + f_{Q'}(K,M_{D-1})- \frac{\sum_{i=1}^{D-1}M_i}{D-1} \\ \quad\quad\geq  f_{Q'}(K,M_{i^*}) - \frac{\sum_{i=1}^{D-1}M_i}{D-1}\\
\quad\quad \geq  f_{Q'}(K,M_{i^*}) - M_{i^*}\\
\quad\quad \geq  \frac{M_{i^*}}{K}f_{Q'}(K,K) - M_{i^*}\\
\quad\quad = \frac{M_{i^*}}{K}\left(f_{Q'}(K,K)-K\right)\\
\quad\quad \geq  \frac{M_{i^*}}{K}\left(f_{0}(K,K)-K\log D - K\right)\\
\quad\quad \geq  0,
\end{IEEEeqnarray*}
\item[-] $(c)$ follows by noting that the expression in $(b)$ is the constraint on sum-rate imposed by a cut which is $\V_0\cup \L$, which we know is lower bounded by $f_{Q'}(K,K)-K.$ 
\end{itemize}
\end{itemize}

The above analysis shows that \eqref{eq:constraint_k_K} renders all other constraints redundant.\hfill\IEEEQED

\begin{claim}\label{claim:gen_hadamard}For any $Q\geq 0,$ any $k\in\{1,2,\dots, K-1\}$ and any $l\in\{1,2,\dots, K\}$,
$$\frac{K}{k}f_Q(l,k) \geq f_Q(l,K).$$
\end{claim}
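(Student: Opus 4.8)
The plan is to reduce the claim $\frac{K}{k}f_Q(l,k)\ge f_Q(l,K)$ to a single deterministic per-realization inequality and then average it over the fading. Writing $\lambda\triangleq\frac{P}{(Q+1)\sigma^2}$, the symmetry $f_Q(x,y)=f_Q(y,x)$ (property (a) in Appendix~\ref{app:d_Q}, i.e.\ the identity $\det(I+AB)=\det(I+BA)$) lets me re-express the function in \eqref{eq:f_Q} as
$$f_Q(l,y)=\EE\left[\log\det\left(I_y+\lambda\,\bH_{l,y}^\dagger\bH_{l,y}\right)\right],$$
where $\bH_{l,y}$ is an $l\times y$ matrix with i.i.d.\ $\C\N(0,1)$ entries. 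This puts $f_Q(l,k)$ and $f_Q(l,K)$ on a common footing as Gram-matrix log-determinants.

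First I would fix a realization of the full $l\times K$ matrix $\bH_{l,K}$, and for each $k$-subset $S=\{i_1<\dots<i_k\}\subseteq\{1,\dots,K\}$ let $\bH_{l,S}=\bH_{l,(i_1,\dots,i_k)}$ denote its submatrix of columns indexed by $S$. The core of the argument is the deterministic inequality displayed just above the claim,
$$\frac{1}{k\binom{K}{k}}\sum_{|S|=k}\log\det\!\left(\pi e\left(I_k+\lambda\,\bH_{l,S}^\dagger\bH_{l,S}\right)\right)\ \ge\ \frac{1}{K}\log\det\!\left(\pi e\left(I_K+\lambda\,\bH_{l,K}^\dagger\bH_{l,K}\right)\right).$$
I would justify it by reading both sides as normalized differential entropies and invoking Han's inequality. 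Set $B\triangleq I_K+\lambda\,\bH_{l,K}^\dagger\bH_{l,K}\succ 0$ and let $X\sim\C\N(0,B)$ on $\mathbb{C}^K$. Since the Gram matrix of a subset of columns is exactly the corresponding principal submatrix, one has $B_{SS}=I_k+\lambda\,\bH_{l,S}^\dagger\bH_{l,S}$, so $h(X_S)=\log\det(\pi e\,B_{SS})$ is the $S$-summand on the left and $h(X)=\log\det(\pi e\,B)$ is the numerator on the right. Han's inequality for differential entropy, $\frac{1}{k}\cdot\frac{1}{\binom{K}{k}}\sum_{|S|=k}h(X_S)\ge\frac{1}{K}h(X)$, is then precisely this inequality; note the $\pi e$ factors contribute $\log(\pi e)$ to each side and cancel, so it also holds after they are dropped.

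Finally I would take expectation over $\bH_{l,K}$ and use exchangeability of the columns: because the entries are i.i.d., $\bH_{l,S}\stackrel{d}{=}\bH_{l,k}$ for every $k$-subset $S$, whence $\EE[\log\det(I_k+\lambda\,\bH_{l,S}^\dagger\bH_{l,S})]=f_Q(l,k)$ for each $S$. Averaging the deterministic inequality over all $\binom{K}{k}$ subsets thus collapses the left-hand side to $\frac{1}{k}f_Q(l,k)$ and the right-hand side to $\frac{1}{K}f_Q(l,K)$, and multiplying by $K$ gives the claim. The one step that requires care is the identification $B_{SS}=I_k+\lambda\,\bH_{l,S}^\dagger\bH_{l,S}$ together with the correct normalization in Han's inequality (dividing the $k$-subset entropy average by $k$); once these are in place the result is immediate, so the only genuine content is the entropy inequality itself.
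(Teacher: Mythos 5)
Your proof is correct and takes essentially the same approach as the paper: the paper also fixes a realization of $\bH_{l,K}$, invokes Han's subset inequality (cited as \cite[eq.\ (3.15)]{H78}) in precisely the $\pi e$-decorated determinant form you derive, cancels the $\pi e$ factors, and then takes expectations using the identical statistics of the column submatrices. The only difference is that you make the citation's content explicit by constructing the Gaussian vector $X\sim\C\N(0,B)$ with $B=I_K+\lambda\,\bH_{l,K}^\dagger\bH_{l,K}$ and identifying $B_{SS}=I_k+\lambda\,\bH_{l,S}^\dagger\bH_{l,S}$, a step the paper leaves to the reference.
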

\begin{proof}
Recall that $f_Q(l,K)$ is defined to be $$\EE\left[\log\det\left(I + \frac{P}{(Q+1)\sigma^2}\bH_{l,K}^\dagger\bH_{l,K}\right)\right].$$

To be more explicit in the following, we write $I_p$ to denote an identity matrix of size $p$. Also, for brevity, we denote $\frac{P}{(Q+1)\sigma^2}$ by $\lambda$. For any fixed $\bH_{l,K}$, we have by \cite[eq. (3.15)]{H78} the inequality given at the top of this page, where $\bH_{l,(i_1,\dots,i_k)}$ is obtained by choosing the columns of $\bH_{l,K}$ indexed by $(i_1,\dots, i_k).$

Hence,
\begin{IEEEeqnarray*}{l}
\frac{1}{k~{K \choose k}}\hspace{-3.5pt} \sum_{1\leq i_1 < \dots < i_k\leq K}\log\det  \left(I_k+\lambda\,\bH_{l,(i_1,\dots,i_k)}^\dagger\bH_{l,(i_1,\dots , i_k)}\right)\\
\quad\quad\quad\quad +\> \frac{1}{k}\log\left((\pi e)^k\right)\\
\quad \geq \frac{1}{K}\log\left((\pi e)^K\right) + \frac{1}{K}\log\det \left(I_K+\lambda\,\bH_{l,K}^\dagger\bH_{l,K}\right),
\end{IEEEeqnarray*}
which means
\begin{IEEEeqnarray*}{L}
\frac{1}{k~{K \choose k}} \sum_{1\leq i_1 < \dots < i_k\leq K}\log\det  \left(I+\lambda\,\bH_{l,(i_1,\dots,i_k)}^\dagger\bH_{l,(i_1,\dots , i_k)}\right)\\
\quad\quad\quad\quad \geq  \frac{1}{K}\log\det\left(I+\lambda\,\bH_{l,K}^\dagger\bH_{l,K}\right).
\end{IEEEeqnarray*}

Now, taking expectation on both sides and observing that each term in the summation has identical statistics, the desired claim is proved.
\end{proof}

\bibliographystyle{IEEEtran}
\bibliography{IEEEfull,itw_gap}

\end{document}